\documentclass[10pt]{article}

\usepackage{natbib}
\usepackage{amsthm}

\newtheorem{proposition}{Proposition}

\usepackage[utf8]{inputenc} 
\usepackage[T1]{fontenc}   
\usepackage{hyperref}     
\usepackage{url}           
\usepackage{booktabs}    
\usepackage{amsfonts}    
\usepackage{nicefrac}     
\usepackage{microtype}   

\usepackage{amsmath}
\usepackage{nccmath}
\usepackage{amssymb}
\usepackage{mathrsfs}
\usepackage{graphicx}
\usepackage{ulem}
\usepackage{algorithmic}
\usepackage{algorithm}

\usepackage{array}
\usepackage{xcolor}

\usepackage{multirow}
\usepackage{array, makecell, rotating}

\DeclareMathOperator{\E}{\mathbb{E}}
\DeclareMathOperator*{\argmax}{arg\,max}
\DeclareMathOperator*{\argmin}{arg\,min}
\DeclareMathOperator*{\minimize}{minimize}

\begin{document}

\title{Robust Bayesian Model Selection for Variable Clustering with the Gaussian Graphical Model}

\author{
  Daniel Andrade\footnote{The first author is also affiliated with NEC Corporation.} \\
  \footnotesize Graduate University of Advanced Studies (SOKENDAI) \\
     \footnotesize  10-3 Midoricho, Tachikawa, Tokyo, 190-8562, Japan \\
   \footnotesize  andrade@ism.ac.jp
  \and 
  Akiko Takeda  \\
  \footnotesize   Department of Creative Informatics \\
\footnotesize   The University of Tokyo \\
\footnotesize   7-3-1 Hongo, Bunkyo-ku, Tokyo, 113-8656, Japan \\
\footnotesize   takeda@mist.i.u-tokyo.ac.jp 
  \and 
  Kenji Fukumizu  \\
  \footnotesize    The Institute of Statistical Mathematics \\
  \footnotesize  10-3 Midoricho, Tachikawa, Tokyo, 190-8562, Japan \\
  \footnotesize    fukumizu@ism.ac.jp
}

\maketitle

\begin{abstract}
Variable clustering is important for explanatory analysis. However, only few dedicated methods for variable clustering with the Gaussian graphical model have been proposed.
Even more severe, small insignificant partial correlations due to noise can dramatically change the clustering result when evaluating for example with the Bayesian Information Criteria (BIC).
In this work, we try to address this issue by proposing a Bayesian model that accounts for negligible small, but not necessarily zero, partial correlations. 
Based on our model, we propose to evaluate a variable clustering result using the marginal likelihood. 
To address the intractable calculation of the marginal likelihood, we propose two solutions: one based on a variational approximation, and another based on MCMC.
Experiments on simulated data shows that the proposed method is similarly accurate as BIC in the no noise setting, but considerably more accurate when there are noisy partial correlations.
Furthermore, on real data the proposed method provides clustering results that are intuitively sensible, which is not always the case when using BIC or its extensions. 
\end{abstract}

\section{Introduction}

The Gaussian graphical model (GGM) has become an invaluable tool for detecting partial correlations between variables.
Assuming the variables are jointly drawn from a multivariate normal distribution, the sparsity pattern of the precision matrix reveals which pairs of variables are independent given all other variables \citep{anderson2004introduction}.
In particular, we can find clusters of variables that are mutually independent, by grouping the variables according their entries in the precision matrix.

However, in practice, it can be difficult to find a meaningful clustering due to the noise of the entries in the partial correlations.
The noise can be due to the sampling, this is in particular the case when $n$ the number of observations is small, or due to small non-zero partial correlations in the true precision matrix that might be considered as insignificant.
Here in this work, we are particularly interested in the latter type of noise.
In the extreme, small partial correlations might lead to a connected graph of variables, where no grouping of variables can be identified.
For an exploratory analysis such a result might not be desirable. 

As an alternative, we propose to find a clustering of variables, such that the partial correlation between two variables in different groups is negligibly small, but not necessarily zero.
The open question, which we try to address here, is whether there is a principled model selection criteria for this scenario.

For example, the Bayesian Information Criteria (BIC) \citep{schwarz1978estimating} is a popular model selection criteria for the Gaussian graphical model. However, in the noise setting it does not have any formal guarantees.
As a solution, we propose here a Bayesian model that explicitly accounts for small partial correlations between variables in different clusters.

Under our proposed model, the marginal likelihood of the data can then be used to identify the correct (if there is a ground truth in theory), or at least a meaningful clustering (in practice) that helps analysis. 
The marginal likelihood of our model does not have an analytic solution. Therefore, we provide two approximations.
The first is a variational approximation, the second is based on MCMC.

Experiments on simulated data show that the proposed method is similarly accurate as BIC in the no noise setting, but considerably more accurate when there are noisy partial correlations.
The proposed method also compares favorable to two previously proposed methods for variable clustering and model selection, namely the Clustered Graphical Lasso (CGL) \citep{tan2015cluster} and the Dirichlet Process Variable Clustering (DPVC) \citep{palla2012nonparametric} method.

Our paper is organized as follows. In Section \ref{sec:relatedWork}, we discuss previous works related to variable clustering and model selection. 
In Section \ref{sec:basicBayesianModel}, we introduce a basic Bayesian model for evaluating variable clusterings, which we then extend in Section \ref{sec:proposedModel} to handle noise on the precision matrix. 
For the proposed model, which can handle noise in the precision matrix, the calculation of the marginal likelihood is infeasible and we describe our approximation strategy in Section \ref{sec:estimationMarginalLikelihood}.
Since enumerating all possible clusterings is intractable, we describe in Section \ref{sec:restrictingHypothesesSpace} an heuristic based on spectral clustering to limit the number of candidate clusterings.
We evaluate the proposed method on synthetic and real data in Sections \ref{sec:simulatedData} and \ref{sec:realDataExperiments}, respectively. Finally, we discuss our findings in Section \ref{sec:conclusions}.

\section{Related Work} \label{sec:relatedWork}
Finding a clustering of variables is equivalent to finding an appropriate block structure of the covariance matrix.
Recently, \cite{tan2015cluster} and \cite{devijver2016block} suggested to detect block diagonal structure by thresholding the absolute values of the covariance matrix. Their methods perform model selection using the mean squared error of randomly left out elements of the covariance matrix \citep{tan2015cluster}, and a slope heuristic \citep{devijver2016block}. 

Also several Bayesian latent variable models have been proposed for this task \citep{marlin2009sparse,sun2014adaptive,palla2012nonparametric}.
Each clustering, including the number of clusters, is either evaluated using the variational lower bound \citep{marlin2009sparse}, or by placing a Dirichlet Process prior over clusterings \citep{palla2012nonparametric,sun2014adaptive}.
However, all of the above methods assume that the partial correlations of variables across clusters are exactly zero.

An exception is the work in \citep{marlin2009group} which proposes to regularize the precision matrix such that 
partial correlations of variables that belong to the same cluster are penalized less than those belonging to different clusters.
For that purpose they introduce three hyper-parameters, $\lambda_1$ (for within cluster penalty), $\lambda_0$ (for across clusters), with $\lambda_0 > \lambda_1$, and $\lambda_D$ for a penalty of the diagonal elements.
The clusters do not need to be known a-priori and are estimated by optimizing a lower bound on the marginal likelihood. 
As such their method can also find variable clusterings, even when the true partial correlation of variables in different clusters  is not exactly zero. 
However, the clustering result is influenced by three hyperparameters $\lambda_0, \lambda_1$, and $\lambda_D$ which have to be determined using cross-validation.

Recently, the work in \citep{sun2015inferring,hosseini2016learning} relaxes the assumption of a clean block structure by allowing some variables to correspond to two clusters.
The model selection issue, in particular, determining the number of clusters, is either addressed with some heuristics \citep{sun2015inferring} or cross-validation \citep{hosseini2016learning}. 

\section{The Bayesian Gaussian Graphical Model for Clustering} \label{sec:basicBayesianModel}

Our starting point for variable clustering is the following Bayesian Gaussian graphical model.
Let us denote by $p$ the number of variables, and $n$ the number of observations.
We assume that each observation $\mathbf{x} \in \mathbb{R}^p$ is generated i.i.d. from a multivariate normal distribution with zero mean and covariance matrix $\Sigma$.
Assuming that there are $k$ groups of variables that are mutually independent, we know that, after appropriate permutation of the variables, $\Sigma$ has the following block structure

\begin{align*}
\Sigma = 
\left(
\begin{array}{ccc}
\Sigma_1 & 0 & 0 \\
0 & \ddots  & 0  \\
0 & 0  & \Sigma_k
\end{array} \right) \, ,
\end{align*}
where $\Sigma_j \in \mathbb{R}^{p_j \times p_j}$, and $p_j$ is the number of variables in cluster $j$.

By placing an inverse Wishart prior over each block $\Sigma_j$, we arrive at the following Bayesian model
\begin{equation}
\begin{aligned}  \label{eq:basicModel}
& p(\mathbf{x}_1, ..., \mathbf{x}_n, \Sigma | \{ \nu_{j} \}_j, \{\Sigma_{j,0}\}_j, \mathcal{C}) \\  
& = \prod_{i = 1}^n \text{Normal} (\mathbf{x}_i  | \mathbf{0}, \Sigma)  \prod_{j=1}^k \text{InvW}(\Sigma_j | \nu_{j}, \Sigma_{j,0})   \, ,
\end{aligned}
\end{equation}
where $\nu_{j}$ and $\Sigma_{j,0}$, are the degrees of freedom and the scale matrix, respectively. We set
$\nu_{j} =p_j +1, \Sigma_j = I_{p_j}$ leading to a non-informative prior on $\Sigma_j$. 
$\mathcal{C}$ denotes the variable clustering which imposes the block structure on $\Sigma$. 
We will refer to this model as the basic inverse Wishart prior model.

Assuming we are given a set of possible variable clusterings $\mathscr{C}$, we can then choose the clustering $\mathcal{C}^*$ that maximizes the posterior probability of the clustering, i.e.
\begin{align*}
\mathcal{C}^* = \argmax_{\mathcal{C} \in \mathscr{C}} p(\mathcal{C} | \mathscr{X})  = \argmax_{\mathcal{C} \in \mathscr{C}} p(\mathscr{X} | \mathcal{C}) \cdot p(\mathcal{C}) \, ,
\end{align*}
where we denote by $\mathscr{X}$ the observations $\mathbf{x}_1, ..., \mathbf{x}_n$, and $p(\mathcal{C})$ is a prior over the clusterings which we assume to be uniform.
Here, we refer to $p(\mathscr{X} | \mathcal{C})$ as the marginal likelihood (given the clustering). For the basic inverse Wishart prior model the marginal likelihood can be calculated analytically, see e.g. \citep{lenkoski2011computational}.

\section{Proposed Model} \label{sec:proposedModel}

In this section, we extend the Bayesian model from Equation \eqref{eq:basicModel} to account for non-zero partial correlations between variables in different clusters.
For that purpose we introduce the matrix $\Sigma_{\epsilon} \in \mathbb{R}^{p \times p}$ that models the noise on the precision matrix.
The full joint probability of our model is given as follows:

\begin{equation}
\begin{aligned}  \label{eq:proposedModel}
&p(\mathbf{x}_1, ..., \mathbf{x}_n, \Sigma, \Sigma_{\epsilon} | \nu_{\epsilon}, \Sigma_{\epsilon,0}, \{ \nu_{j} \}_j, \{\Sigma_{j,0}\}_j, \mathcal{C})  \\
&= \prod_{i = 1}^n \text{Normal} (\mathbf{x}_i  | \mathbf{0}, \Xi) \\
&\quad \cdot \text{InvW}(\Sigma_{\epsilon} | \nu_{\epsilon}, \Sigma_{\epsilon,0}) \prod_{j=1}^k \text{InvW}(\Sigma_j | \nu_{j}, \Sigma_{j,0})   \, ,
\end{aligned}
\end{equation}
where $\Xi := (\Sigma^{-1} + \beta \Sigma_{\epsilon}^{-1})^{-1}$, and 
\begin{align*}
\Sigma := 
\left(
\begin{array}{ccc}
\Sigma_1 & 0 & 0 \\
0 & \ddots  & 0  \\
0 & 0  & \Sigma_k
\end{array} \right) \, . 
\end{align*}
As before, the block structure of $\Sigma$ is given by the clustering $\mathcal{C}$.
The proposed model is the same model as in Equation \eqref{eq:basicModel}, with the main difference that the noise term $\beta \Sigma_{\epsilon}^{-1}$ is added to the precision matrix of the normal distribution.

$1 \gg \beta > 0$ is a hyper-parameter that is fixed to a small positive value accounting for the degree of noise on the precision matrix.
Furthermore, we assume non-informative priors on $\Sigma_j$ and $\Sigma_{\epsilon}$ by setting $\nu_{j} =p_j +1, \Sigma_j = I_{p_j}$ and  $\nu_{\epsilon} = p + 1, \Sigma_{\epsilon,0} = I_p$.

\paragraph{Remark on the parameterization} We note that as an alternative parameterization, we could have defined $\Xi := (\Sigma^{-1} + \Sigma_{\epsilon}^{-1})^{-1}$, and 
instead place a prior on $\Sigma_{\epsilon}$ that encourages $\Sigma_{\epsilon}^{-1}$ to be small in terms of some matrix norm. For example, we could have set
$\Sigma_{\epsilon,0} = \frac{1}{\beta} I_p$.

\section{Estimation of the Marginal Likelihood} \label{sec:estimationMarginalLikelihood}

The marginal likelihood of the data given our proposed model can be expressed as follows:
\begin{align*}
&p(\mathbf{x}_1, ..., \mathbf{x}_n | \nu_{\epsilon}, \Sigma_{\epsilon,0}, \{ \nu_{j} \}_j, \{\Sigma_{j,0}\}_j, \mathcal{C}) \\
&= \int  \text{Normal} (\mathbf{x}_1, ..., \mathbf{x}_n  | \mathbf{0}, \Xi ) \\
&\quad \cdot \prod_{j=1}^k \text{InvW}(\Sigma_j | \nu_{j}, \Sigma_{j,0}) d(\Sigma_{j} \succ 0) \\
&\quad \cdot \text{InvW}(\Sigma_{\epsilon} | \nu_{\epsilon}, \Sigma_{\epsilon,0}) d(\Sigma_{\epsilon} \succ 0) \, .
\end{align*}
where $\Xi := (\Sigma^{-1} + \beta \Sigma_{\epsilon}^{-1})^{-1}$.

Clearly, if $\beta = 0$, we recover the basic inverse Wishart prior model, as discussed in Section \ref{sec:basicBayesianModel}, and the marginal likelihood has a closed form solution due to the conjugacy of the covariance matrix of the Gaussian and the inverse Wishart prior.
However, if $\beta > 0$, there is no analytic solution anymore. Therefore, we propose to either use an estimate based on a variational approximation (Section \ref{sec:variationalApproximation}) or on MCMC (Section \ref{sec:MCMCEstimate}).
Both of our estimates require the calculation of the maximum a posterior solution which we explain first in Section \ref{sec:MAPsolution}.

\paragraph{Remark on BIC type approximation of the marginal likelihood} We note that for our proposed model an approximation of the marginal likelihood using BIC is not sensible.
To see this, recall that BIC consists of two terms: the data log-likelihood under the model with the maximum likelihood estimate, and a penalty depending on the number of free parameters.
The maximum likelihood estimate is 
\begin{align*}
\hat{\Sigma}, \hat{\Sigma}_{\epsilon} = \argmax_{\Sigma, \Sigma_{\epsilon}} \sum_{i = 1}^n \log \text{Normal} (\mathbf{x}_i  | \mathbf{0}, (\Sigma^{-1} + \beta \Sigma_{\epsilon}^{-1})^{-1}) \, ,
\end{align*}
where $S$ is the sample covariance matrix. 
Note that without the specification of a prior, it is valid that $\hat{\Sigma}, \hat{\Sigma}_{\epsilon}$ are not positive definite as long as the matrix $\hat{\Sigma}^{-1} + \beta \hat{\Sigma}_{\epsilon}^{-1}$ is positive definite.
Therefore $\hat{\Sigma}^{-1} + \beta \hat{\Sigma}_{\epsilon}^{-1} = S^{-1}$, and the data likelihood under the model with the maximum likelihood estimate is simply $\sum_{i = 1}^n \log \text{Normal} (\mathbf{x}_i  | \mathbf{0}, S)$, which is independent of the clustering.
The number of \emph{free} parameters is $(p^2 - p) / 2$ which is also independent of the clustering. 
That means, for any clustering we end up with the same BIC.

Furthermore, a Laplacian approximation as used in the generalized Bayesian information criterion \citep{konishi2004bayesian} is also not suitable, since in our case the parameter space is over the positive definite matrices.

\subsection{Calculation of maximum a posterior solution} \label{sec:MAPsolution}

First note that 
\begin{align*}
&p(\Sigma, \Sigma_{\epsilon} | \mathbf{x}_1, ..., \mathbf{x}_n, \nu_{\epsilon}, \Sigma_{\epsilon,0}, \{ \nu_{j} \}_j, \{\Sigma_{j,0}\}_j, \mathcal{C}) \\
&\propto  \text{Normal} (\mathbf{x}_1, ..., \mathbf{x}_n  | \mathbf{0}, \Xi) \\
&\quad \cdot \prod_{j=1}^k \text{InvW}(\Sigma_j | \nu_{j}, \Sigma_{j,0}) \\
&\quad \cdot \text{InvW}(\Sigma_{\epsilon} | \nu_{\epsilon}, \Sigma_{\epsilon,0}) 
\end{align*}
where $\Xi := (\Sigma^{-1} + \beta \Sigma_{\epsilon}^{-1})^{-1}$.

Therefore, 
\begin{align*}
& \log p(\Sigma, \Sigma_{\epsilon} | \mathbf{x}_1, ..., \mathbf{x}_n, \nu_{\epsilon}, \Sigma_{\epsilon,0}, \{ \nu_{j} \}_j, \{\Sigma_{j,0}\}_j, \mathcal{C}) = \\
& -\frac{n}{2} \log |\Xi| -\frac{n}{2} trace (S \Xi^{-1}  ) \\
& -\frac{\nu_{\epsilon} + p + 1}{2} \log |\Sigma_{\epsilon}| - \frac{1}{2} trace (\Sigma_{\epsilon,0} \Sigma_{\epsilon}^{-1}) \\
& + \sum_{j = 1}^{k}  \Big( -\frac{\nu_{j} + p_j + 1}{2} \log |\Sigma_{j}| - \frac{1}{2} trace (\Sigma_{j,0} \Sigma_{j}^{-1}) \Big) \\
& + const \\
=& \, \frac{1}{2} \Big(n \cdot \log |\Xi^{-1}| - n \cdot trace (S \Xi^{-1}  ) \\
& + (\nu_{\epsilon} + p + 1) \cdot \log |\Sigma_{\epsilon}^{-1}| - trace (\Sigma_{\epsilon,0} \Sigma_{\epsilon}^{-1}) \\
& + \sum_{j = 1}^{k} \Big( (\nu_{j} + p_j + 1) \cdot \log |\Sigma_{j}^{-1}| - trace (\Sigma_{j,0} \Sigma_{j}^{-1}) \Big) \Big) \\
&+ const \, , 
\end{align*}
where the constant is with respect to $\Sigma_{\epsilon}, \Sigma_{1}, \ldots \Sigma_{k}$, and $p_j$ denotes the number of variables in cluster $j$.

\paragraph{Solution using a 3-Block ADMM}

Finding the MAP can be formulated as a convex optimization problem by a change of parameterization: 
by defining $X := \Sigma^{-1}$, $X_j := \Sigma^{-1}_j$, and $X_{\epsilon} := \Sigma_{\epsilon}^{-1}$, we get the following convex optimization problem: 
\begin{equation}
\begin{aligned} \label{eq:MAP_opt_problem}
& \minimize_{X \succ 0, X_{\epsilon} \succ 0} \; n \cdot trace(S (X + \beta X_{\epsilon}))  - n \cdot \log |X + \beta X_{\epsilon}| \\
&\quad  + trace(A_{\epsilon} X_{\epsilon})  - a_{\epsilon} \cdot \log |X_{\epsilon}| \\
&\quad + \sum_{j = 1}^{k}  \Big( trace(A_j X_j)  - a_j \cdot \log |X_j| \Big) \, ,
\end{aligned}
\end{equation}
where, for simplifying notation, we introduced the following constants:
\begin{align*}
& A_{\epsilon} := \Sigma_{\epsilon,0} \, , \\
& a_{\epsilon} := \nu_{\epsilon} + p + 1 \, , \\
& A_{j} := \Sigma_{j,0} \, , \\
& a_{j} := \nu_{j} + p_j + 1 \, .
\end{align*}
From this form, we see immediately that the problem is strictly convex jointly in $X_{\epsilon}$ and $X$.\footnote{Since $- log |X|$ is a strictly convex function and $trace(XS)$ is a linear function.}

We further reformulate the problem by introducing an additional variable Z: 
\begin{align*}
& \text{minimize} \; f(X_{\epsilon}, X_1, \ldots, X_k, Z)  \\
& \text{subject to} \; \\
& \; Z = X + \beta X_{\epsilon} \, , \\
& X_{\epsilon}, X_1, \ldots, X_k, Z \succeq 0 \, ,
\end{align*}
with
\begin{align*}
f(X_{\epsilon}, X_1, \ldots, X_k, Z) := \, & n \cdot trace(S Z)  - n \cdot \log |Z| \\
 & + trace(A_{\epsilon} X_{\epsilon}) - a_{\epsilon} \cdot \log |X_{\epsilon}| \\
 & + \sum_{j = 1}^{k}  \Big( trace(A_j X_j)  - a_j \cdot \log |X_j| \Big) \, .
\end{align*}

It is tempting to use a 2-Block ADMM algorithm, like e.g. in \citep{boyd2011distributed}, which leads to two optimization problems: update of $X, X_{\epsilon}$ and update of $Z$.
However, unfortunately, in our case the resulting optimization problem for updating $X, X_{\epsilon}$ does not have an analytic solution. 
Therefore, instead, we suggest the use of a 3-Block ADMM, which updates the following sequence:

\begin{align*}
X^{t + 1}  := & \argmin_{X_1, \ldots, X_k \succ 0} \;  \sum_{j = 1}^{k}  \Big ( trace(A_j X_j)  - a_j \cdot \log |X_j| \Big) \\
&\quad +  trace(U^t  (X + \beta X_{\epsilon}^t - Z^t)) \\
&\quad + \frac{\rho}{2} || X + \beta X_{\epsilon}^t - Z^t ||_F^2 \, , \\
 X_{\epsilon}^{t + 1}  := & \argmin_{X_{\epsilon} \succ 0} \; trace(A_{\epsilon} X_{\epsilon})  - a_{\epsilon} \cdot \log |X_{\epsilon}| \\
 &\quad + trace(U^t  (X^{t+1} + \beta X_{\epsilon} - Z^t))  \\
 &\quad + \frac{\rho}{2} || X^{t+1} + \beta X_{\epsilon} - Z^t ||_F^2 \, , \\
Z^{t+1} := & \argmin_{Z \succ 0} \; n \cdot trace(S Z)  - n \cdot \log |Z| \\ 
&\quad + trace(U^t  (X^{t+1} + \beta X_{\epsilon}^{t+1} - Z))  \\
&\quad +  \frac{\rho}{2} || X^{t+1} + \beta X_{\epsilon}^{t + 1} - Z ||_F^2 \, , \\
U^{t+1} := &  \rho (X^{t+1} + \beta X_{\epsilon}^{t+1} - Z^{t + 1}) + U^t \, ,
\end{align*}

where $U$ is the Lagrange multiplier, and $X^t, Z^t, U^t$, denotes $X, Z, U$ at iteration $t$; $\rho > 0$ is the learning rate.\footnote{In our experiments, we set the learning rate $\rho$ initially to 1.0, and increase it every 100 iterations by a factor of $1.1$. We found experimentally that this speeds-up the convergence of ADMM. }

Each of the above sub-optimization problem can be solved efficiently via the following strategy. 
The zero gradient condition for the first optimization problem with variable $X$ is
\begin{align*}
&  - X_j^{-1} + \frac{\rho}{a_j} X_j = - \frac{1}{a_j} (A_j + U_j + \rho (\beta X_{\epsilon,j} - Z_j)) \, .
\end{align*}

The zero gradient condition for the second optimization problem with variable $X_{\epsilon}$ is
\begin{align*}
&  - X_{\epsilon}^{-1} + \frac{\rho \beta^2}{a_{\epsilon}} X_{\epsilon} = - \frac{1}{a_{\epsilon}} ( A_{\epsilon}  + \beta U + \rho \beta (X - Z)) \, .
\end{align*}

The zero gradient condition for the third optimization problem with variable $Z$ is
\begin{align*}
&  - Z^{-1} + \frac{\rho}{n} Z = \frac{1}{n} ( U  - nS + \rho (X + \beta X_{\epsilon})) \, .
\end{align*}

Each of the above three optimization problem can be solved via an eigenvalue decomposition as follows.
We need to solve $V$ such that it satisfies:
\begin{align*}
- V^{-1} + \lambda V  = R \; \;  \land \; \;  V \succeq 0
\end{align*}
Since $R$ is a symmetric matrix (not necessarily positive or negative semi-definite), we have the eigenvalue decomposition:
\begin{align*}
QLQ^T = R \, ,
\end{align*}
where $Q$ is an orthonormal matrix and $L$ is a diagonal matrix with real values.
Denoting $Y := Q^T V Q$, we have 
\begin{align} \label{eq:diagonalTransformed}
- Y^{-1} + \lambda Y  = L \, ,
\end{align}
Since the solution $Y$ must also be a diagonal matrix, we have $Y_{ij} = 0$, for $j \neq i$, and we must have that 
\begin{align} \label{eq:yTranformed}
- (Y_{ii})^{-1} + \lambda Y_{ii}  = L_{ii} \, .
\end{align}
Then, Equation \eqref{eq:yTranformed} is equivalent to
\begin{align*}
 \lambda Y_{ii}^2  - L_{ii} Y_{ii}  -1 = 0 \, ,
\end{align*}
and therefore one solution is 
\begin{align*}
Y_{ii} = \frac{L_{ii} + \sqrt{L_{ii}^2 + 4 \lambda }}{2 \lambda}  \, .
\end{align*}
Note that for $\lambda > 0$, we have that $Y_{ii} > 0$.  Therefore, we have that the resulting $Y$ solves Equation \eqref{eq:diagonalTransformed} and moreover 
\begin{align*}
V = Q Y Q^T \succ 0 \, .
\end{align*}
That means, we can solve the semi-definite problem with only one eigenvalue decomposition, and therefore is in $O(p^3)$.

Finally, we note that in contrast to the 2-block ADMM, a general 3-block ADMM does not have a convergence guarantee for any $\rho > 0$.
However, using a recent result from \citep{lin2015global}, we can show in Appendix \ref{app:convergence3BlockADMM} that in our case the conditions for convergence are met for any $\rho > 0$.

\subsection{Variational Approximation of the Marginal Likelihood} \label{sec:variationalApproximation}

Here we explain our strategy for the calculation of a variational approximation of the marginal likelihood.
For simplicity, let $\boldsymbol{\theta}$ denote the vector of all parameters, $\mathscr{X}$ the observed data, and $\boldsymbol{\eta}$ the vector of all hyper-parameters.

Let $\hat{\boldsymbol{\theta}}$ denote the posterior mode. 
Furthermore, let $g(\boldsymbol{\theta})$ be an approximation of the posterior distribution $p(\boldsymbol{\theta} | \mathscr{X}, \boldsymbol{\eta}, \mathcal{C})$ that is accurate around the mode $\hat{\boldsymbol{\theta}}$.

Then we have 
\begin{equation}
\begin{aligned} \label{eq:marginalApproxAroundMode}
p(\mathscr{X} | \boldsymbol{\eta}, \mathcal{C}) &= \frac{p(\boldsymbol{\theta} , \mathscr{X} | \boldsymbol{\eta}, \mathcal{C})}{p(\boldsymbol{\theta} | \mathscr{X}, \boldsymbol{\eta}, \mathcal{C})} \\
&=  \frac{p(\hat{\boldsymbol{\theta}}, \mathscr{X} | \boldsymbol{\eta}, \mathcal{C})}{p(\hat{\boldsymbol{\theta}} | \mathscr{X}, \boldsymbol{\eta}, \mathcal{C})} 
\approx \frac{p(\hat{\boldsymbol{\theta}} , \mathscr{X} | \boldsymbol{\eta}, \mathcal{C})}{g(\hat{\boldsymbol{\theta}})} \, .
\end{aligned}
\end{equation}

Note that for the Laplace approximation we would use $g(\boldsymbol{\theta}) = N(\boldsymbol{\theta} | \hat{\boldsymbol{\theta}}, V)$, where $V$ is an appropriate covariance matrix. However, here the posterior $p(\boldsymbol{\theta} | \mathscr{X}, \boldsymbol{\eta}, \mathcal{C})$ is a probability measure over the positive definite matrices and not over $\mathbb{R}^d$, which makes the Laplace approximation inappropriate. 

Instead, we suggest to approximate the posterior distribution \\
$p(\Sigma_{\epsilon}, \Sigma_{1}, \ldots \Sigma_{k} | \mathbf{x}_1, ..., \mathbf{x}_n, \nu_{\epsilon}, \Sigma_{\epsilon,0}, \{ \nu_{j} \}_j, \{\Sigma_{j,0}\}_j, \mathcal{C})$ by the factorized distribution
\begin{align*}
g :=  g_{\epsilon}(\Sigma_{\epsilon}) \cdot \prod_{j=1}^{k} g_{j}(\Sigma_{j}) \, . 
\end{align*}
We define $g_{\epsilon}(\Sigma_{\epsilon})$ and $g_{j}(\Sigma_j)$ as follows:
\begin{align*}
g_{\epsilon}(\Sigma_{\epsilon}) := \text{InvW}(\Sigma_{\epsilon} | \nu_{g, \epsilon}, \Sigma_{g, \epsilon}) \, ,
\end{align*}
with 
\begin{align*}
\Sigma_{g,\epsilon} :=  (\nu_{g, \epsilon} + p + 1) \cdot \hat{\Sigma}_{\epsilon}  \, ,
\end{align*}
where $\hat{\Sigma}_{\epsilon}$ is the mode of the posterior probability $p(\Sigma_{\epsilon} | \mathscr{X}, \boldsymbol{\eta}, \mathcal{C})$ (as calculated in the previous section). Note that this choice ensures that the mode of $g_{\epsilon}$ is the same as the mode of 
$p(\Sigma_{\epsilon} | \mathbf{x}_1, ..., \mathbf{x}_n, \boldsymbol{\eta}, \mathcal{C})$.
Analogously, we set
\begin{align*}
g_{j}(\Sigma_{j}) := \text{InvW}(\Sigma_{j} | \nu_{g, j}, \Sigma_{g, j}) \, ,
\end{align*}
with 
\begin{align*}
\Sigma_{g,j} :=  (\nu_{g, j} + p_j + 1) \cdot \hat{\Sigma}_{j}  \, ,
\end{align*}
where $\hat{\Sigma}_{j}$ is the mode of the posterior probability $p(\Sigma_{j} | \mathscr{X}, \boldsymbol{\eta}, \mathcal{C})$.
The remaining parameters $\nu_{g, \epsilon} \in \mathbb{R}$ and $\nu_{g,j} \in \mathbb{R}$ are optimized by minimizing the 
KL-divergence between the the factorized distribution $g$ and the posterior distribution $p(\Sigma_{\epsilon}, \Sigma_{1}, \ldots \Sigma_{k} | \mathbf{x}_1, ..., \mathbf{x}_n, \boldsymbol{\eta}, \mathcal{C})$.  
The details of the following derivations are given in Appendix \ref{app:derivationVA}.
For simplicity let us denote $g_J := \prod_{j=1}^{k} g_{j}$, then we have
\begin{align*}
KL(g || p) &= - \int g_{\epsilon}(\Sigma_{\epsilon}) \cdot \prod_{j=1}^{k} g_{j}(\Sigma_{j}) \\
&\quad \log \frac{p(\Sigma_{\epsilon}, \Sigma_{1}, \ldots \Sigma_{k}, \mathbf{x}_1, ..., \mathbf{x}_n | \boldsymbol{\eta}, \mathcal{C})}{ g_{\epsilon}(\Sigma_{\epsilon}) \cdot \prod_{j=1}^{k} g_{j}(\Sigma_{j})} d \Sigma_{\epsilon} d\Sigma \\
&\quad + c \\
&= - \frac{1}{2} n \E_{g_J, g_{\epsilon}}[\log |\Sigma^{-1} + \beta \Sigma_{\epsilon}^{-1}|] \\
&\quad + \frac{1}{2} (\nu_{\epsilon} + p + 1) \E_{g_{\epsilon}}[ \log |\Sigma_{\epsilon}| ] \\
&\quad+ \frac{1}{2} trace ((\Sigma_{\epsilon,0} + \beta nS) \E_{g_{\epsilon}}[ \Sigma_{\epsilon}^{-1} ] ) \\
&\quad- \text{Entropy}[g_{\epsilon}] \\
&\quad + \frac{1}{2} \sum_{j = 1}^{k}  (\nu_{j} + p_j + 1) \E_{g_j}[ \log |\Sigma_{j}|] \\
&\quad+ \frac{1}{2} \sum_{j = 1}^{k}  trace ((\Sigma_{j,0} + nS_j) \E_{g_j}[\Sigma_{j}^{-1}]) \\
&\quad- \sum_{j = 1}^{k} \text{Entropy}[g_{j}] + c \, ,
\end{align*}
where $c$ is a constant with respect to $g_{\epsilon}$ and $g_j$. 
However, the term $E_{g_J, g_{\epsilon}}[\log |\Sigma^{-1} + \beta \Sigma_{\epsilon}^{-1}|]$ cannot be solved analytically, therefore we need to resort to some sort of approximation. 

We assume that $E_{g_J, g_{\epsilon}}[\log |\Sigma^{-1} + \beta \Sigma_{\epsilon}^{-1}|]$ \\
$ \approx E_{g_J, g_{\epsilon}}[\log |\Sigma^{-1}|]$.
This way, we get 
\begin{align*}
KL(g || p) &\approx  KL(g_{\epsilon} \, || \, \text{InvW}(\nu_{\epsilon}, \Sigma_{\epsilon,0} + \beta nS)) \\
&\quad + \sum_{j = 1}^{k}  KL(g_j \, || \, \text{InvW} (\nu_{j} + n,  \Sigma_{j,0} + nS_j)) \\
&\quad+ c' \, ,
\end{align*}
where we used that 
\begin{align*}
\E_{g_J, g_{\epsilon}}[\log |\Sigma^{-1}|]
= - \sum_{j = 1}^{k}  \E_{g_j}[\log |\Sigma_j|] \, ,
\end{align*}
 and $c'$ is a constant with respect to $g_{\epsilon}$ and $g_j$. 

From the above expression, we see that we can optimize the parameters of $g_{\epsilon}$ and $g_j$ independently from each other.
The optimal parameter $\hat{\nu}_{g, \epsilon}$ for $g_{\epsilon}$ is
\begin{align*}
\hat{\nu}_{g, \epsilon} &= \argmin_{\nu_{g, \epsilon}} KL(g_{\epsilon} \, || \, \text{InvW}(\nu_{\epsilon}, \Sigma_{\epsilon,0} + \beta nS)) \\
&= \argmin_{\nu_{g, \epsilon}} \frac{\nu_{g, \epsilon}}{\nu_{g, \epsilon} + p + 1}  trace ((\Sigma_{\epsilon,0} + \beta nS) \hat{\Sigma}_{\epsilon}^{-1}) \\
&\quad- 2 \log \Gamma_p(\frac{ \nu_{g, \epsilon}}{2}) - \nu_{g, \epsilon} p + p \nu_{\epsilon} \log (\nu_{g, \epsilon} + p + 1) \\
&\quad + (\nu_{g, \epsilon} - \nu_{\epsilon}) \sum_{i=1}^{p} \psi \Big(\frac{\nu_{g, \epsilon} - p + i}{2} \Big) \, .
\end{align*}
And analogously, we have 
\begin{align*}
\hat{\nu}_{g, j} &= \argmin_{\nu_{g, j}} \,  \frac{\nu_{g, j}}{\nu_{g, j} + p_j + 1}  trace ((\Sigma_{j,0} + nS_j) \hat{\Sigma}_{j}^{-1}) \\
&\quad - 2 \log \Gamma_{p_j}(\frac{ \nu_{g, j}}{2}) - \nu_{g, j} p_j \\
&\quad + p_j (\nu_{j} + n) \log (\nu_{g, j} + p_j + 1) \\
&\quad + (\nu_{g, j} - \nu_{j} - n) \sum_{i=1}^{p_j} \psi \Big(\frac{\nu_{g, j} - p_j + i}{2} \Big) \, . 
\end{align*}
Each is a one dimensional non-convex optimization problem that we solve with Brent's method \citep{brent1971algorithms}.

\subsection{MCMC Estimation of Marginal Likelihood} \label{sec:MCMCEstimate}

As an alternative to the variational approximation, we investigate an MCMC estimation based on Chib's method \citep{chib1995marginal,chib2001marginal}.

To simplify the description, we introduction the following notations
\begin{align*}
& \boldsymbol{\theta}_1 := \Sigma_{\epsilon} \, , \\
& \boldsymbol{\theta}_{2}, \ldots,  \boldsymbol{\theta}_{k+1} := \Sigma_1, \ldots,  \Sigma_k \,. 
\end{align*}
Furthermore, we define
$\boldsymbol{\theta}_{< i} := \{\boldsymbol{\theta}_{1}, \ldots, \boldsymbol{\theta}_{i - 1} \}$ and $\boldsymbol{\theta}_{> i} := \{\boldsymbol{\theta}_{i+1}, \ldots, \boldsymbol{\theta}_{k+1} \}$.  
For simplicity, we also suppress in the notation the explicit conditioning on the hyper-parameters $\boldsymbol{\eta}$ and the clustering $\mathcal{C}$, which are both fixed.

Following the strategy of \cite{chib1995marginal}, the marginal likelihood can be expressed as
\begin{equation}
\begin{aligned} \label{eq:marginalAroundModeChib}
p(\mathscr{X}) &= \frac{p(\hat{\boldsymbol{\theta}}_1, \ldots, \hat{\boldsymbol{\theta}}_{k+1} , \mathscr{X})}{p(\hat{\boldsymbol{\theta}}_1, \ldots, \hat{\boldsymbol{\theta}}_{k+1} | \mathscr{X}) } \\
&= \frac{p(\hat{\boldsymbol{\theta}}_1, \ldots, \hat{\boldsymbol{\theta}}_{k+1} , \mathscr{X})}{ \prod_{i = 1}^{k+1} p(\hat{\boldsymbol{\theta}}_i | \mathscr{X}, \hat{\boldsymbol{\theta}}_{1}  \ldots, \hat{\boldsymbol{\theta}}_{i-1}) } 
\end{aligned}
\end{equation}

In order to approximate $p(\mathscr{X})$ with Equation \eqref{eq:marginalAroundModeChib}, we need to estimate $p(\hat{\boldsymbol{\theta}}_i | \mathscr{X}, \hat{\boldsymbol{\theta}}_1, \ldots \hat{\boldsymbol{\theta}}_{i-1})$.
First, note that we can express the value of the conditional posterior distribution at $\hat{\boldsymbol{\theta}}_i$, as follows (see \cite{chib2001marginal}, Section 2.3): \
\begin{equation}
\begin{aligned} \label{eq:conditionalProbValueExpected}
&p(\hat{\boldsymbol{\theta}}_i | \mathscr{X}, \hat{\boldsymbol{\theta}}_1, \ldots \hat{\boldsymbol{\theta}}_{i-1}) \\
&= \frac{\E_{\boldsymbol{\theta}_{\geq i}  \sim p(\boldsymbol{\theta}_{\geq i} | \mathscr{X}, \hat{\boldsymbol{\theta}}_{< i})} [ \alpha(\boldsymbol{\theta}_i,  \hat{\boldsymbol{\theta}}_i |  \hat{\boldsymbol{\theta}}_{< i}, \boldsymbol{\theta}_{> i})  q_i(\hat{\boldsymbol{\theta}}_i)] }
{\E_{\boldsymbol{\theta}_{\geq i}  \sim p(\boldsymbol{\theta}_{> i} | \mathscr{X}, \hat{\boldsymbol{\theta}}_{\leq i}) q(\boldsymbol{\theta}_i)} [ \alpha(\hat{\boldsymbol{\theta}}_i, \boldsymbol{\theta}_i |  \hat{\boldsymbol{\theta}}_{< i}, \boldsymbol{\theta}_{> i})] } \, ,
\end{aligned}
\end{equation}
where $q_i(\boldsymbol{\theta}_i)$ is a proposal distribution for $\boldsymbol{\theta}_i$, and the acceptance probability of moving from state $\boldsymbol{\theta}_i$ to state $\boldsymbol{\theta}_i'$, holding the other states fixed is defined as
\begin{align} \label{eq:acceptanceProb}
\alpha(\boldsymbol{\theta}_i,  \boldsymbol{\theta}_i' |  \boldsymbol{\theta}_{< i}, \boldsymbol{\theta}_{> i}) := \min \{ 1, \frac{p(\mathscr{X}, \boldsymbol{\theta}_{< i}, \boldsymbol{\theta}_{> i}, \boldsymbol{\theta}_i') \cdot q_i(\boldsymbol{\theta}_i) }
{p(\mathscr{X}, \boldsymbol{\theta}_{< i}, \boldsymbol{\theta}_{> i}, \boldsymbol{\theta}_i) \cdot q_i(\boldsymbol{\theta}_i') } \} \, .
\end{align}

Next, using Equation \eqref{eq:conditionalProbValueExpected}, we can estimate \\
$p(\hat{\boldsymbol{\theta}}_i | \mathscr{X}, \hat{\boldsymbol{\theta}}_1, \ldots \hat{\boldsymbol{\theta}}_{i-1})$ with a Monte Carlo approximation with $M$ samples:

\begin{equation}
\begin{aligned}  \label{eq:conditionalProbValueMCMCestimate}
& p(\hat{\boldsymbol{\theta}}_i | \mathscr{X}, \hat{\boldsymbol{\theta}}_1, \ldots \hat{\boldsymbol{\theta}}_{i-1}) \\
& \approx 
\frac{ \frac{1}{M} \sum_{m=1}^M \alpha(\boldsymbol{\theta}_i^{i,m},  \hat{\boldsymbol{\theta}}_i |  \hat{\boldsymbol{\theta}}_{< i}, \boldsymbol{\theta}_{> i}^{i,m})  q_i(\hat{\boldsymbol{\theta}}_i) }
{\frac{1}{M}  \sum_{m=1}^M  \alpha(\hat{\boldsymbol{\theta}}_i, \boldsymbol{\theta}_i^{q,m} |  \hat{\boldsymbol{\theta}}_{< i}, \boldsymbol{\theta}_{> i}^{i+1, m} ) }
\end{aligned} 
\end{equation}
where $\boldsymbol{\theta}_{i}^{a, m} \sim p(\boldsymbol{\theta}_{i} | \mathscr{X}, \hat{\boldsymbol{\theta}}_{< a})$, $\boldsymbol{\theta}_{> i}^{a, m} \sim p(\boldsymbol{\theta}_{> i} | \mathscr{X}, \hat{\boldsymbol{\theta}}_{< a})$, and $\boldsymbol{\theta}_{i}^{q,m} \sim q(\boldsymbol{\theta}_i)$.

Finally, in order to sample from $p(\boldsymbol{\theta}_{\geq i} | \mathscr{X}, \hat{\boldsymbol{\theta}}_{< i})$, we propose to use the Metropolis-Hastings within Gibbs sampler as shown in Algorithm \ref{alg:propMHGibs}.
$MH_j (\boldsymbol{\theta}_j^{t},  \boldsymbol{\psi})$ denotes the Metropolis-Hastings algorithm with current state $\boldsymbol{\theta}_j^{t}$, and acceptance probability $\alpha(\boldsymbol{\theta}_j,  \boldsymbol{\theta}_j' |  \boldsymbol{\psi})$, Equation \eqref{eq:acceptanceProb}, and $\boldsymbol{\theta}_{\geq i}^{0}$ is a sample after the burn-in.
For the proposal distribution $q_i(\boldsymbol{\theta}_i)$, we use 
\begin{align} \label{eq:proposalDist}
q_i := 
\left\{
\begin{array}{ll}
\text{InvW}(\nu, \hat{\Sigma}_{\epsilon} \cdot (\nu + p + 1))  \\
\text{with $\nu = \beta \kappa \cdot n + \nu_{\epsilon}$}  & \text{if } i = 1,\\
\text{InvW}(\nu, \hat{\Sigma}_{i-1} \cdot (\nu + p_{i-1} + 1)) \\
 \text{with $\nu = (1 - \beta) \kappa \cdot n + \nu_{i-1}$}  & \text{else. }
\end{array} \right.
\end{align}
Here $\kappa >0$ is a hyper-parameter of the MCMC algorithm that is chosen to control the acceptance probability.
Note that if we choose $\kappa = 1$ and $\beta$ is 0, then the proposal distribution $q_i(\boldsymbol{\theta}_i)$ equals the posterior distribution $p(\boldsymbol{\theta}_i | \mathscr{X}, \hat{\boldsymbol{\theta}}_1, \ldots \hat{\boldsymbol{\theta}}_{i-1})$.
However, in practice, we found that the acceptance probabilities can be too small, leading to unstable estimates and division by 0 in Equation \eqref{eq:conditionalProbValueMCMCestimate}. Therefore, for our experiments we chose $\kappa = 10$.

\begin{algorithm}[h]
\caption{Metropolis-Hastings within Gibbs sampler for sampling from $p(\boldsymbol{\theta}_{\geq i} | \mathscr{X}, \hat{\boldsymbol{\theta}}_{< i})$. \label{alg:propMHGibs}}
\begin{algorithmic}
\FOR {$t$ from 1 to $M$}
\FOR  {$j$ from $i$ to $k+1$}
	\STATE $\boldsymbol{\psi} := \{\hat{\boldsymbol{\theta}}_{<i}, \boldsymbol{\theta}_{i}^{t}, \ldots, \boldsymbol{\theta}_{j-1}^{t}, \boldsymbol{\theta}_{> j}^{t-1} \}$
	\STATE $\boldsymbol{\theta}_j^{t} := MH_j (\boldsymbol{\theta}_j^{t-1},  \boldsymbol{\psi})$
\ENDFOR
\ENDFOR
\end{algorithmic}
\end{algorithm}

\section{Restricting the hypotheses space} \label{sec:restrictingHypothesesSpace}

The number of possible clusterings follow the Bell numbers, and therefore it is infeasible to enumerate all possible clusterings, even if the number of variables $p$ is small.
It is therefore crucial to restrict the hypotheses space to a subset of all clusterings that are likely to contain the true clustering.
We denote this subset as $\mathscr{C}^*$.

We suggest to use spectral clustering on different estimates of the precision matrix to acquire the set of clusterings $\mathscr{C}^*$.
A motivation for this heuristic is given in Appendix \ref{app:spectralClustering}.

First, for an appropriate $\lambda$, we estimate the precision matrix using
\begin{align} \label{eq:finalOptProblem}
& X^{*} := \argmin_{X \succeq 0} - \log |X| + trace (X S ) + \lambda \sum_{i \neq j} |X_{ij}|^q \, .
\end{align}
In our experiments, we take $q = 1$, which is equivalent to the Graphical Lasso \citep{friedman2008sparse} with an l1-penalty on all entries of $X$ except the diagonal.
In the next step, we then construct the Laplacian $L$ as defined in the following.
\begin{equation}
\begin{aligned} \label{eq:laplacianDef}
& L_{ii} = \sum_{k \neq i} |X_{ik}^*|^q \, , \\
& L_{ij} =  - |X_{ij}^*|^q  \; \;  \text{for } i \neq j \, . 
\end{aligned}
\end{equation}
Finally, we use k-means clustering on the eigenvectors of the Laplacian L.
The details of acquiring the set of clusterings $\mathscr{C}^*$ using the spectral clustering method are summarized below:
\begin{algorithm}
\caption{Spectral Clustering for variable clustering with the Gaussian graphical model. \label{alg:spectralVarClustering}}
\begin{algorithmic}
\STATE $J$ := set of regularization parameter values.
\STATE $K_{max}$ := maximum number of considered clusters.
\STATE $\mathscr{C}^* := \{ \}$
\FOR  {$\lambda \in J$}
\STATE $X^* := $ solve optimization problem from Equation \eqref{eq:finalOptProblem}.
\STATE $(\mathbf{e}_1,\ldots, \mathbf{e}_{K_{max}}) := $ determine the eigenvectors corresponding to the $K_{max}$ lowest eigenvalues of the Laplacian $L$ as defined in Equations \eqref{eq:laplacianDef}.
\FOR  {$k \in \{2,\ldots ,K_{max}\}$}
\STATE $\mathcal{C}_{\lambda, k} := $ cluster all variables into $k$ partitions using k-means with $(\mathbf{e}_1,\ldots, \mathbf{e}_k)$.
\STATE $\mathscr{C}^* := \mathscr{C}^* \cup \mathcal{C}_{\lambda, k}$
\ENDFOR
\ENDFOR
\RETURN restricted hypotheses space  $\mathscr{C}^*$
\end{algorithmic}
\end{algorithm}

In Section \ref{sec:evalOfRestrictedHypothesesSpace} we confirm experimentally that, even in the presence of noise, $\mathscr{C}^*$ often contains the true clustering, or clusterings that are close to the true clustering.

\subsection{Posterior distribution over number of clusters} \label{sec:posteriorOverNumberOfClusters}

In principle, the posterior distribution for the number of clusters can be calculated using
\begin{align*}
p(k | \mathscr{X}) \propto \sum_{\mathcal{C} \in \mathscr{C}_{k}} p(\mathscr{X} | \mathcal{C}) \, ,
\end{align*}
where $\mathscr{C}_{k}$ denotes the set of all clusterings with number of clusters being equal to $k$. 
Since this is computationally infeasible, we use the following approximation
\begin{align*}
P(k | X) \propto \sum_{\mathcal{C} \in \mathscr{C}_{k}}  p(X | \mathcal{C}) \approx \sum_{\mathcal{C} \in \mathscr{C}^*_{k}}  p(X | \mathcal{C}) \, ,
\end{align*}
where $\mathscr{C}^*_{k}$ is the set of all clusterings with $k$ clusters that are in the restricted hypotheses space $\mathscr{C}^*$.

\section{Simulation Study} \label{sec:simulatedData}

In this section, we evaluate the proposed method on simulated data for which the ground truth is available. In sub-section \ref{sec:evalOfRestrictedHypothesesSpace}, we evaluate the quality of the restricted hypotheses space $\mathscr{C}^*$,
followed by sub-section \ref{sec:evalClusteringSelectionSyntheticData}, where we evaluated the proposed method's ability to select the best clustering in $\mathscr{C}^*$.

For the number of clusters we consider the range from $2$ to $15$. 
For the set of regularization parameters of the spectral clustering method we use $J := \{$0.0001, 0.0005, 0.001, 0.002, 0.003, 0.004, 0.005, 0.006, 0.007, 0.008, 0.009, 0.01$\}$ (see Algorithm \ref{alg:spectralVarClustering}).

In all experiments the number of variables is $p = 40$, and the ground truth is 4 clusters with 10 variables each.

For generating positive-definite covariance matrices, we consider the following two distributions: $\text{InvW}(p + 1, I_{p})$, and $\text{Uniform}_p$, with dimension $p$.
We denote by $U \sim \text{Uniform}_p$ the positive-definite matrix generated in the following way
\begin{align*}
& U = A + (0.001 - \lambda_{min} (A)) I_p \, ,
\end{align*}
where $\lambda_{min} (A)$ is the smallest eigenvalue of $A$, and $A$ is drawn as follows
\begin{align*}
& A_{i,j} = A_{j,i}  \sim \text{Uniform}(-1, 1)  \,  , i \neq j \\
& A_{i,i} = 0 \, .
\end{align*}

For generating $\Sigma$, we either sample each block $j$ from $\text{InvW}(p_j + 1, I_{p_j})$ or from $\text{Uniform}_{p_j}$. 

For generating the noise matrix $\Sigma_{\epsilon}$, we sample either from $\text{InvW}(p + 1, I_{p})$ or from $\text{Uniform}_{p}$.
The final data is then sampled as follows
\begin{align*}
x \sim N(0, (\Sigma^{-1} + \eta \Sigma_{\epsilon}^{-1})^{-1}) \, ,
\end{align*}
where $\eta$ defines the noise level.

For evaluation we use the adjusted normalized mutual information (ANMI), where $0.0$ means that any correspondence with the true labels is at chance level, and $1.0$ means that a perfect one-to-one correspondence exists \citep{vinh2010information}.
We repeated all experiments 5 times and report the average ANMI score.

\subsection{Evaluation of the restricted hypotheses space} \label{sec:evalOfRestrictedHypothesesSpace}

First, independent of any model selection criteria, we check here the quality of the clusterings that are found with the spectral clustering algorithm from Section \ref{sec:restrictingHypothesesSpace}.
We also compare to single and average linkage clustering as used in \citep{tan2015cluster}.  

The set of all clusterings that are found is denoted by $\mathscr{C}^*$ (the restricted hypotheses space).

In order to evaluate the quality of the restricted hypotheses space $\mathscr{C}^*$, we report the oracle performance calculated by 
$\max_{\mathcal{C} \in \mathscr{C}^*} \text{ANMI}(\mathcal{C}, \mathcal{C}_T)$,
where $\mathcal{C}_T$ denotes the true clustering, and $\text{ANMI}(\mathcal{C}, \mathcal{C}_T)$ denotes the ANMI score when comparing clustering $\mathcal{C}$ with the true clustering.
In particular, a score of 1.0 means that the true clustering is contained in $\mathscr{C}^*$.

The results of all experiments with noise level $\eta \in \{0.0, 0.01, 0.1\}$ are shown in Tables \ref{tab:hypothesesSpaceEvaluationBalanced}, for balanced clusters, and Table \ref{tab:hypothesesSpaceEvaluationUnbalanced}, for unbalanced clusters.

From these results we see that the restricted hypotheses space of spectral clustering is around 100, considerably smaller than the number of all possible clusterings. 
More importantly, we also see that that $\mathscr{C}^*$ acquired by spectral clustering either contains the true clustering or a clustering that is close to the truth.
In contrast, the hypotheses space restricted by single and average linkage is smaller, but more often misses the true clustering. 

\begin{table*}[h]
\footnotesize
  \caption{Evaluation of restricted hypotheses space for $p = 40$, 
  $n \in \{20, 40, 400, 4000, 40000, 4000000\}$. Ground truth contains 4 balanced clusters. 
  Shows the oracle performance measured by ANMI for spectral clustering, average linkage and single linkage. Note that that an ANMI score of 1.0 means that the true clustering is contained in the hypotheses space found by the clustering method.
  The size of the hypotheses space restricted by each clustering method is denoted by $|\mathscr{C}^*|$.
  Average results over 5 runs with standard deviation in brackets.}
  \label{tab:hypothesesSpaceEvaluationBalanced}
  \begin{tabular}{lcllllll}
  \toprule
    & \multicolumn{7}{c}{$\Sigma_j \sim \text{InvW}(p_j + 1, I_{p_j})$, no noise} \\
  \cmidrule{2-8}
   & & 20 & 40 & 400 & 4000 & 40000 & 4000000 \\
 \cmidrule{2-8}
\multirow{2}{*}{spectral}  & ANMI  &    0.77 (0.14) & 0.95 (0.06) & 1.0 (0.0) & 1.0 (0.0) & 1.0 (0.0) & 1.0 (0.0) \\
 & $|\mathscr{C}^*|$ & 140.8 (5.78) & 139.0 (8.65) & 112.8 (5.64) & 99.8 (2.23) & 101.4 (7.94) & 98.4 (3.61) \\
 \cmidrule{2-8}
\multirow{2}{*}{average}  & ANMI  &     0.38 (0.09) & 0.38 (0.06) & 0.45 (0.05) & 0.45 (0.03) & 0.45 (0.07) & 0.45 (0.03) \\
 & $|\mathscr{C}^*|$ & 14.0 (0.0) & 14.0 (0.0) & 14.0 (0.0) & 14.0 (0.0) & 14.0 (0.0) & 14.0 (0.0) \\
\cmidrule{2-8}
\multirow{2}{*}{single}  & ANMI  &      0.32 (0.08) & 0.34 (0.09) & 0.39 (0.08) & 0.39 (0.08) & 0.42 (0.14) & 0.41 (0.08) \\
 & $|\mathscr{C}^*|$ & 14.0 (0.0) & 14.0 (0.0) & 14.0 (0.0) & 14.0 (0.0) & 14.0 (0.0) & 14.0 (0.0) \\
\midrule
& \multicolumn{6}{c}{$\Sigma_j \sim \text{InvW}(p_j + 1, I_{p_j}), \Sigma_{\epsilon} \sim \text{InvW}(p + 1, I_{p}), \eta = 0.01$} \\
 \cmidrule{2-8}
\multirow{2}{*}{spectral}  & ANMI  &    0.49 (0.03) & 0.9 (0.03) & 1.0 (0.0) & 1.0 (0.0) & 1.0 (0.0) & 1.0 (0.0) \\
 & $|\mathscr{C}^*|$  & 143.2 (7.25) & 144.4 (3.32) & 108.6 (9.89) & 105.4 (9.79) & 103.6 (5.0) & 97.0 (6.57) \\
\cmidrule{2-8}
\multirow{2}{*}{average}  & ANMI  &     0.26 (0.05) & 0.34 (0.04) & 0.46 (0.07) & 0.51 (0.08) & 0.42 (0.09) & 0.45 (0.06) \\
 & $|\mathscr{C}^*|$ & 14.0 (0.0) & 14.0 (0.0) & 14.0 (0.0) & 14.0 (0.0) & 14.0 (0.0) & 14.0 (0.0) \\
 \cmidrule{2-8}
\multirow{2}{*}{single}  & ANMI  &      0.16 (0.08) & 0.25 (0.08) & 0.37 (0.03) & 0.4 (0.06) & 0.3 (0.12) & 0.32 (0.09) \\
 & $|\mathscr{C}^*|$ & 14.0 (0.0) & 14.0 (0.0) & 14.0 (0.0) & 14.0 (0.0) & 14.0 (0.0) & 14.0 (0.0) \\
\midrule
 & \multicolumn{6}{c}{$\Sigma_j \sim \text{InvW}(p_j + 1, I_{p_j}), \Sigma_{\epsilon} \sim \text{InvW}(p + 1, I_{p}), \eta = 0.1$} \\
 \cmidrule{2-8}
\multirow{2}{*}{spectral}  & ANMI  &    0.34 (0.1) & 0.87 (0.09) & 1.0 (0.0) & 1.0 (0.0) & 1.0 (0.0) & 1.0 (0.0) \\
 & $|\mathscr{C}^*|$ & 121.4 (7.34) & 106.4 (18.51) & 35.4 (5.12) & 33.2 (11.48) & 37.4 (5.54) & 31.0 (8.65) \\
\cmidrule{2-8}
\multirow{2}{*}{average}  & ANMI  &     0.1 (0.05) & 0.15 (0.03) & 0.34 (0.08) & 0.37 (0.1) & 0.26 (0.11) & 0.28 (0.09) \\
 & $|\mathscr{C}^*|$ & 14.0 (0.0) & 14.0 (0.0) & 14.0 (0.0) & 14.0 (0.0) & 14.0 (0.0) & 14.0 (0.0) \\
 \cmidrule{2-8}
\multirow{2}{*}{single}  & ANMI  &      0.04 (0.03) & 0.08 (0.04) & 0.19 (0.11) & 0.21 (0.06) & 0.11 (0.03) & 0.13 (0.02) \\
 & $|\mathscr{C}^*|$ & 14.0 (0.0) & 14.0 (0.0) & 14.0 (0.0) & 14.0 (0.0) & 14.0 (0.0) & 14.0 (0.0) \\
\midrule
    & \multicolumn{7}{c}{$\Sigma_j \sim \text{Uniform}_{p_j}$, no noise} \\
     \cmidrule{2-8}
\multirow{2}{*}{spectral}  & ANMI  &    0.34 (0.1) & 0.87 (0.09) & 1.0 (0.0) & 1.0 (0.0) & 1.0 (0.0) & 1.0 (0.0) \\
 & $|\mathscr{C}^*|$ & 121.4 (7.34) & 106.4 (18.51) & 35.4 (5.12) & 33.2 (11.48) & 37.4 (5.54) & 31.0 (8.65) \\
\cmidrule{2-8}
\multirow{2}{*}{average}  & ANMI  &     0.1 (0.06) & 0.26 (0.07) & 0.92 (0.11) & 1.0 (0.0) & 1.0 (0.0) & 0.99 (0.03) \\
 & $|\mathscr{C}^*|$ & 14.0 (0.0) & 14.0 (0.0) & 14.0 (0.0) & 14.0 (0.0) & 14.0 (0.0) & 14.0 (0.0) \\
\cmidrule{2-8}
\multirow{2}{*}{single}  & ANMI  &      0.04 (0.02) & 0.13 (0.08) & 0.82 (0.25) & 1.0 (0.0) & 1.0 (0.0) & 0.99 (0.03) \\
 & $|\mathscr{C}^*|$ & 14.0 (0.0) & 14.0 (0.0) & 14.0 (0.0) & 14.0 (0.0) & 14.0 (0.0) & 14.0 (0.0) \\
\midrule
& \multicolumn{6}{c}{$\Sigma_j \sim \text{Uniform}_{p_j}, \Sigma_{\epsilon} \sim \text{Uniform}_{p}, \eta = 0.01$} \\
 \cmidrule{2-8}
\multirow{2}{*}{spectral}  & ANMI  &    0.28 (0.06) & 0.81 (0.1) & 0.94 (0.06) & 0.99 (0.03) & 0.99 (0.03) & 0.97 (0.03) \\
 & $|\mathscr{C}^*|$ & 127.2 (3.6) & 106.0 (5.29) & 48.2 (9.77) & 50.2 (5.95) & 51.0 (8.94) & 48.0 (5.69) \\
\cmidrule{2-8}
\multirow{2}{*}{average}  & ANMI  &     0.14 (0.05) & 0.22 (0.04) & 0.81 (0.16) & 0.89 (0.1) & 0.87 (0.12) & 0.94 (0.12) \\
 & $|\mathscr{C}^*|$ & 14.0 (0.0) & 14.0 (0.0) & 14.0 (0.0) & 14.0 (0.0) & 14.0 (0.0) & 14.0 (0.0) \\
\cmidrule{2-8}
\multirow{2}{*}{single}  & ANMI  &      0.04 (0.02) & 0.1 (0.04) & 0.78 (0.13) & 0.71 (0.23) & 0.78 (0.11) & 0.79 (0.17) \\
 & $|\mathscr{C}^*|$ & 14.0 (0.0) & 14.0 (0.0) & 14.0 (0.0) & 14.0 (0.0) & 14.0 (0.0) & 14.0 (0.0) \\
\midrule
& \multicolumn{6}{c}{$\Sigma_j \sim \text{Uniform}_{p_j}, \Sigma_{\epsilon} \sim \text{Uniform}_{p}, \eta = 0.1$} \\
 \cmidrule{2-8}
\multirow{2}{*}{spectral}  & ANMI  &    0.3 (0.03) & 0.72 (0.08) & 0.88 (0.07) & 0.9 (0.07) & 0.87 (0.11) & 0.88 (0.04) \\
 & $|\mathscr{C}^*|$ & 126.2 (2.23) & 120.4 (9.35) & 74.4 (19.41) & 87.2 (7.93) & 79.2 (13.61) & 77.0 (14.25) \\
 \cmidrule{2-8}
\multirow{2}{*}{average}  & ANMI  &     0.08 (0.04) & 0.26 (0.11) & 0.83 (0.15) & 0.88 (0.12) & 0.87 (0.11) & 0.94 (0.12) \\
 & $|\mathscr{C}^*|$ & 14.0 (0.0) & 14.0 (0.0) & 14.0 (0.0) & 14.0 (0.0) & 14.0 (0.0) & 14.0 (0.0) \\
\cmidrule{2-8}
\multirow{2}{*}{single}  & ANMI  &      0.05 (0.03) & 0.13 (0.07) & 0.7 (0.14) & 0.69 (0.15) & 0.76 (0.12) & 0.76 (0.14) \\
 & $|\mathscr{C}^*|$ & 14.0 (0.0) & 14.0 (0.0) & 14.0 (0.0) & 14.0 (0.0) & 14.0 (0.0) & 14.0 (0.0) \\
\bottomrule
    \end{tabular}
 \end{table*}

 \begin{table*}[h]
\footnotesize
  \caption{Same setting as in Table \ref{tab:hypothesesSpaceEvaluationBalanced} but with unbalanced clusters. Ground truth is 4 clusters with sizes 20, 10, 5, 5.}
  \label{tab:hypothesesSpaceEvaluationUnbalanced}
  \begin{tabular}{lcllllll}
  \toprule
    & \multicolumn{7}{c}{$\Sigma_j \sim \text{InvW}(p_j + 1, I_{p_j})$, no noise} \\
  \cmidrule{2-8}
   & & 20 & 40 & 400 & 4000 & 40000 & 4000000 \\
\cmidrule{2-8}
\multirow{2}{*}{spectral}  & ANMI  &    0.52 (0.13) & 0.85 (0.11) & 1.0 (0.0) & 1.0 (0.0) & 1.0 (0.0) & 1.0 (0.0) \\
 & $|\mathscr{C}^*|$ & 141.2 (6.62) & 133.2 (8.03) & 80.8 (8.21) & 73.4 (8.89) & 62.0 (7.38) & 62.6 (7.23) \\
 \cmidrule{2-8}
\multirow{2}{*}{average}  & ANMI  &     0.34 (0.06) & 0.39 (0.05) & 0.37 (0.04) & 0.38 (0.07) & 0.38 (0.06) & 0.44 (0.09) \\
 & $|\mathscr{C}^*|$ & 14.0 (0.0) & 14.0 (0.0) & 14.0 (0.0) & 14.0 (0.0) & 14.0 (0.0) & 14.0 (0.0) \\
\cmidrule{2-8}
\multirow{2}{*}{single}  & ANMI  &      0.33 (0.05) & 0.35 (0.03) & 0.32 (0.04) & 0.32 (0.14) & 0.27 (0.13) & 0.39 (0.12) \\
 & $|\mathscr{C}^*|$ & 14.0 (0.0) & 14.0 (0.0) & 14.0 (0.0) & 14.0 (0.0) & 14.0 (0.0) & 14.0 (0.0) \\
\midrule
& \multicolumn{6}{c}{$\Sigma_j \sim \text{InvW}(p_j + 1, I_{p_j}), \Sigma_{\epsilon} \sim \text{InvW}(p + 1, I_{p}), \eta = 0.01$} \\
\cmidrule{2-8}
\multirow{2}{*}{spectral}  & ANMI  &    0.55 (0.13) & 0.81 (0.07) & 1.0 (0.0) & 1.0 (0.0) & 1.0 (0.0) & 1.0 (0.0) \\
 & $|\mathscr{C}^*|$ & 148.8 (4.62) & 136.0 (6.81) & 80.4 (9.77) & 68.8 (10.3) & 67.0 (5.93) & 63.0 (14.3) \\
 \cmidrule{2-8}
\multirow{2}{*}{average}  & ANMI  &     0.34 (0.06) & 0.37 (0.08) & 0.53 (0.12) & 0.5 (0.1) & 0.46 (0.1) & 0.52 (0.1) \\
 & $|\mathscr{C}^*|$ & 14.0 (0.0) & 14.0 (0.0) & 14.0 (0.0) & 14.0 (0.0) & 14.0 (0.0) & 14.0 (0.0) \\
 \cmidrule{2-8}
\multirow{2}{*}{single}  & ANMI  &      0.29 (0.07) & 0.29 (0.08) & 0.41 (0.17) & 0.4 (0.14) & 0.37 (0.11) & 0.32 (0.12) \\
 & $|\mathscr{C}^*|$ & 14.0 (0.0) & 14.0 (0.0) & 14.0 (0.0) & 14.0 (0.0) & 14.0 (0.0) & 14.0 (0.0) \\
\midrule
 & \multicolumn{6}{c}{$\Sigma_j \sim \text{InvW}(p_j + 1, I_{p_j}), \Sigma_{\epsilon} \sim \text{InvW}(p + 1, I_{p}), \eta = 0.1$} \\
\cmidrule{2-8}
\multirow{2}{*}{spectral}  & ANMI  &    0.26 (0.04) & 0.5 (0.06) & 0.93 (0.07) & 0.93 (0.07) & 0.99 (0.02) & 0.91 (0.08) \\
 & $|\mathscr{C}^*|$ & 144.4 (5.54) & 159.2 (1.83) & 121.0 (10.43) & 120.2 (6.62) & 117.0 (3.41) & 113.2 (11.91) \\
 \cmidrule{2-8}
\multirow{2}{*}{average}  & ANMI  &     0.2 (0.03) & 0.22 (0.06) & 0.37 (0.09) & 0.36 (0.08) & 0.41 (0.13) & 0.44 (0.07) \\
 & $|\mathscr{C}^*|$ & 14.0 (0.0) & 14.0 (0.0) & 14.0 (0.0) & 14.0 (0.0) & 14.0 (0.0) & 14.0 (0.0) \\
\cmidrule{2-8}
\multirow{2}{*}{single}  & ANMI  &      0.2 (0.08) & 0.2 (0.07) & 0.24 (0.04) & 0.29 (0.05) & 0.33 (0.07) & 0.32 (0.05) \\
 & $|\mathscr{C}^*|$ & 14.0 (0.0) & 14.0 (0.0) & 14.0 (0.0) & 14.0 (0.0) & 14.0 (0.0) & 14.0 (0.0) \\
\midrule
    & \multicolumn{7}{c}{$\Sigma_j \sim \text{Uniform}_{p_j}$, no noise} \\
\cmidrule{2-8}
\multirow{2}{*}{spectral}  & ANMI  &    0.36 (0.06) & 0.72 (0.13) & 1.0 (0.0) & 1.0 (0.0) & 1.0 (0.0) & 1.0 (0.0) \\
 & $|\mathscr{C}^*|$ & 124.0 (7.29) & 115.8 (9.89) & 40.8 (12.5) & 39.4 (5.2) & 33.2 (4.79) & 38.6 (5.24) \\
\cmidrule{2-8}
\multirow{2}{*}{average}  & ANMI  &     0.09 (0.04) & 0.05 (0.08) & 0.12 (0.07) & 0.29 (0.07) & 0.37 (0.07) & 0.34 (0.14) \\
 & $|\mathscr{C}^*|$ & 14.0 (0.0) & 14.0 (0.0) & 14.0 (0.0) & 14.0 (0.0) & 14.0 (0.0) & 14.0 (0.0) \\
\cmidrule{2-8}
\multirow{2}{*}{single}  & ANMI  &      0.01 (0.04) & -0.01 (0.0) & -0.01 (0.01) & 0.06 (0.1) & 0.17 (0.19) & 0.13 (0.12) \\
 & $|\mathscr{C}^*|$ & 14.0 (0.0) & 14.0 (0.0) & 14.0 (0.0) & 14.0 (0.0) & 14.0 (0.0) & 14.0 (0.0) \\
\midrule
& \multicolumn{6}{c}{$\Sigma_j \sim \text{Uniform}_{p_j}, \Sigma_{\epsilon} \sim \text{Uniform}_{p}, \eta = 0.01$} \\
\cmidrule{2-8}
\multirow{2}{*}{spectral}  & ANMI  &    0.39 (0.04) & 0.67 (0.11) & 0.85 (0.05) & 0.89 (0.07) & 0.87 (0.07) & 0.89 (0.06) \\
 & $|\mathscr{C}^*|$ & 125.6 (8.06) & 115.0 (12.85) & 42.6 (7.09) & 59.2 (11.55) & 53.2 (9.2) & 54.0 (6.69) \\
\cmidrule{2-8}
\multirow{2}{*}{average}  & ANMI  &     0.04 (0.03) & 0.06 (0.05) & 0.12 (0.06) & 0.21 (0.08) & 0.18 (0.09) & 0.21 (0.13) \\
 & $|\mathscr{C}^*|$ & 14.0 (0.0) & 14.0 (0.0) & 14.0 (0.0) & 14.0 (0.0) & 14.0 (0.0) & 14.0 (0.0) \\
\cmidrule{2-8}
\multirow{2}{*}{single}  & ANMI  &      -0.01 (0.0) & -0.01 (0.0) & -0.01 (0.0) & 0.0 (0.02) & 0.01 (0.05) & 0.02 (0.05) \\
 & $|\mathscr{C}^*|$ & 14.0 (0.0) & 14.0 (0.0) & 14.0 (0.0) & 14.0 (0.0) & 14.0 (0.0) & 14.0 (0.0) \\
\midrule
& \multicolumn{6}{c}{$\Sigma_j \sim \text{Uniform}_{p_j}, \Sigma_{\epsilon} \sim \text{Uniform}_{p}, \eta = 0.1$} \\
\cmidrule{2-8}
\multirow{2}{*}{spectral}  & ANMI  &    0.32 (0.06) & 0.68 (0.13) & 0.8 (0.09) & 0.81 (0.09) & 0.79 (0.07) & 0.78 (0.09) \\
 & $|\mathscr{C}^*|$ & 124.2 (9.33) & 109.6 (12.63) & 66.6 (10.71) & 74.2 (7.14) & 62.8 (5.11) & 65.2 (13.85) \\
 \cmidrule{2-8}
\multirow{2}{*}{average}  & ANMI  &     0.04 (0.03) & 0.06 (0.05) & 0.09 (0.05) & 0.19 (0.05) & 0.13 (0.06) & 0.2 (0.13) \\
 & $|\mathscr{C}^*|$ & 14.0 (0.0) & 14.0 (0.0) & 14.0 (0.0) & 14.0 (0.0) & 14.0 (0.0) & 14.0 (0.0) \\
\cmidrule{2-8}
\multirow{2}{*}{single}  & ANMI  &      -0.01 (0.0) & -0.01 (0.0) & -0.01 (0.0) & -0.01 (0.0) & -0.01 (0.0) & 0.0 (0.02) \\
 & $|\mathscr{C}^*|$ & 14.0 (0.0) & 14.0 (0.0) & 14.0 (0.0) & 14.0 (0.0) & 14.0 (0.0) & 14.0 (0.0) \\
\bottomrule
    \end{tabular}
 \end{table*}

\subsection{Evaluation of clustering selection criteria} \label{sec:evalClusteringSelectionSyntheticData}

Here, we evaluate the performance of our proposed method for selecting the correct clustering in the restricted hypotheses space $\mathscr{C}^*$.
We compare our proposed method (variational) with several baselines and two previously proposed methods \citep{tan2015cluster,palla2012nonparametric}.
Except for the two previously proposed methods, we created $\mathscr{C}^*$ with the spectral clustering algorithm from Section \ref{sec:restrictingHypothesesSpace}.

As a cluster selection criteria, we compare our method to the Extended Bayesian Information Criterion (EBIC) with $\gamma \in \{0, 0.5, 1\}$ \citep{chen2008extended,foygel2010extended}, Akaike Information Criteria \citep{akaike1973information}, and the Calinski-Harabaz Index (CHI) \citep{calinski1974dendrite}. 
Note that EBIC and AIC are calculated based on the basic Gaussian graphical model (i.e. the model in Equation \ref{eq:basicModel}, but ignoring the prior specification).\footnote{As discussed in Section \ref{sec:estimationMarginalLikelihood}, EBIC (and also AIC) cannot be used with our proposed model.}
Furthermore, we note that EBIC is model consistent, and therefore, assuming that the true precision matrix contains non-zero entries in each element, will choose asymptotically the clustering that has only one cluster with all variables in it.
However, as an advantage for EBIC, we exclude that clustering.
Furthermore, we note that in contrast to EBIC and AIC, the Calinski-Harabaz Index is not a model-based cluster evaluation criterion. The Calinski-Harabaz Index is an heuristic that uses as clustering criterion the ratio of the variance within and across clusters.
As such it is expected to give reasonable clustering results if the noise is considerably smaller in magnitude than the within-cluster variable partial correlations.

We remark that EBIC and AIC is not well defined if the sample covariance matrix is singular, in particular if $n < p$ or $n \approx p$. As an ad-hod remedy, which works well in practice\footnote{In particular for the mutual funds data in the next section, where the covariance matrix was bad conditioned.}, we always add $0.001$ times the identity matrix to the covariance matrix (see also \cite{ledoit2004well}).

Finally, we also compare the proposed method to two previous approaches for variable clustering: the clustered graphical lasso (CGL) as proposed in \citep{tan2015cluster},
and the Dirichlet process variable clustering (DPVC) model as proposed in \citep{palla2012nonparametric}, for which the implementation is available. DPVC models the number of clusters using a Dirichlet process. 
CGL uses for model selection the mean squared error for recovering randomly left-out elements of the covariance matrix. 
CGL uses for clustering either the single linkage clustering (SLC) or the average linkage clustering (ALC) method. 
For conciseness, we show only the results for ALC, since they tended to be better than SLC.

The results of all experiments with noise level $\eta \in \{0.0, 0.01, 0.1\}$ are shown in Tables \ref{tab:simDataNoNoiseBalanced} and \ref{tab:simDataNoiseOnPrecisionBalanced}, for balanced clusters, and Tables \ref{tab:simDataNoNoiseUnbalanced} and \ref{tab:simDataNoiseOnPrecisionUnbalanced}, for unbalanced clusters.

The tables also contain the performance of the proposed method for $\beta \in \{0, 0.01, 0.02, 0.03\}$.
Note that $\beta = 0.0$ corresponds to the basic inverse Wishart prior model for which we can calculate the marginal likelihood analytically.

Comparing the proposed method with different $\beta$, we see that $\beta = 0.02$ offers good clustering performance in the no noise and noisy setting.
In contrast, model selection with EBIC and AIC performs, as expected, well in the no noise scenario, however, in the noisy setting they tend to select incorrect clusterings.
In particular for large sample sizes EBIC tends to fail to identify correct clusterings.

The Calinski-Harabaz Index performs well in the noisy settings, whereas in the no noise setting it performs unsatisfactory.

\begin{table*}[h]
\footnotesize
  \caption{Evaluation of clustering results for $p = 40$, 
  $n \in \{20, 40, 400, 4000, 40000, 4000000\}$. Ground truth is 4 balanced clusters. 
  Shows the ANMI of the selected models (standard deviation in brackets).  No noise is added.}
  \label{tab:simDataNoNoiseBalanced}
  \begin{tabular}{lllllll}
  \toprule
    & \multicolumn{6}{c}{$\Sigma_j \sim \text{InvW}(p_j + 1, I_{p_j})$, no noise} \\
  \cmidrule{2-7}
   &  20 & 40 & 400 & 4000 & 40000 & 4000000 \\
    \midrule
Proposed ($\beta = 0.01$) &     0.76 (0.14) & 0.93 (0.09) & 1.0 (0.0) & 1.0 (0.0) & 1.0 (0.0) & 1.0 (0.0) \\ 
Proposed ($\beta = 0.02$) &     0.7 (0.2) & 0.92 (0.08) & 1.0 (0.0) & 1.0 (0.0) & 1.0 (0.0) & 1.0 (0.0) \\ 
Proposed ($\beta = 0.03$) &     0.67 (0.18) & 0.88 (0.14) & 1.0 (0.0) & 1.0 (0.0) & 1.0 (0.0) & 1.0 (0.0) \\ 
basic inverse Wishart prior         &         0.73 (0.17) & 0.93 (0.09) & 1.0 (0.0) & 1.0 (0.0) & 1.0 (0.0) & 1.0 (0.0) \\ 
EBIC ($\gamma = 0$)      &      0.12 (0.15) & 0.92 (0.08) & 1.0 (0.0) & 1.0 (0.0) & 1.0 (0.0) & 1.0 (0.0) \\ 
EBIC ($\gamma = 0.5$)            &      0.36 (0.03) & 0.51 (0.04) & 0.99 (0.03) & 1.0 (0.0) & 1.0 (0.0) & 1.0 (0.0) \\ 
EBIC ($\gamma = 1.0$)            &      0.35 (0.02) & 0.39 (0.05) & 0.96 (0.05) & 1.0 (0.0) & 1.0 (0.0) & 1.0 (0.0) \\ 
AIC &   0.12 (0.15) & 0.6 (0.49) & 1.0 (0.0) & 1.0 (0.0) & 1.0 (0.0) & 1.0 (0.0) \\ 
Calinski-Harabaz Index &        0.32 (0.03) & 0.19 (0.16) & 0.84 (0.13) & 0.73 (0.0) & 0.73 (0.0) & 0.73 (0.0) \\ 
CGL (ALC) &     0.06 (0.05) & 0.03 (0.05) & 0.11 (0.06) & 0.04 (0.04) & 0.06 (0.03) & 0.06 (0.07) \\ 
DPVC            & 	0.53 (0.07) & 0.61 (0.17) & 0.82 (0.06) & 0.93 (0.09) & NA & NA \\
\midrule
    & \multicolumn{6}{c}{$\Sigma_j \sim \text{Uniform}_{p_j}$, no noise} \\
       \midrule
Proposed ($\beta = 0.01$) &     0.12 (0.04) & 0.48 (0.07) & 0.94 (0.06) & 1.0 (0.0) & 1.0 (0.0) & 1.0 (0.0) \\ 
Proposed ($\beta = 0.02$) &     0.12 (0.05) & 0.4 (0.04) & 0.93 (0.06) & 1.0 (0.0) & 1.0 (0.0) & 1.0 (0.0) \\ 
Proposed ($\beta = 0.03$) &     0.12 (0.05) & 0.39 (0.03) & 0.93 (0.06) & 1.0 (0.0) & 1.0 (0.0) & 1.0 (0.0) \\ 
basic inverse Wishart prior         &         0.14 (0.05) & 0.76 (0.1) & 1.0 (0.0) & 1.0 (0.0) & 1.0 (0.0) & 1.0 (0.0) \\ 
EBIC ($\gamma = 0$)      &      0.07 (0.04) & 0.87 (0.09) & 1.0 (0.0) & 1.0 (0.0) & 1.0 (0.0) & 1.0 (0.0) \\ 
EBIC ($\gamma = 0.5$)            &      0.11 (0.05) & 0.48 (0.12) & 1.0 (0.0) & 1.0 (0.0) & 1.0 (0.0) & 1.0 (0.0) \\ 
EBIC ($\gamma = 1.0$)            &      0.11 (0.05) & 0.38 (0.05) & 1.0 (0.0) & 1.0 (0.0) & 1.0 (0.0) & 1.0 (0.0) \\ 
AIC &   0.07 (0.04) & 0.66 (0.34) & 1.0 (0.0) & 1.0 (0.0) & 1.0 (0.0) & 1.0 (0.0) \\ 
Calinski-Harabaz Index &        0.15 (0.05) & 0.66 (0.16) & 0.79 (0.11) & 0.46 (0.14) & 0.65 (0.23) & 0.59 (0.17) \\ 
CGL (ALC) &     0.03 (0.02) & 0.02 (0.02) & 0.37 (0.03) & 0.39 (0.0) & 0.39 (0.0) & 0.51 (0.25) \\ 
DPVC            & 	0.01 (0.02) & 0.03 (0.03) & 0.4 (0.2) & 0.51 (0.22) & NA & NA \\
\bottomrule
    \end{tabular}
 \end{table*}

\begin{table*}[h]
\footnotesize
  \caption{Evaluation of clustering results with $p = 40$, 
  $n \in \{20, 40, 400, 4000, 40000, 4000000\}$. Ground truth is 4 balanced clusters. 
  Shows the ANMI of the selected models (standard deviation in brackets). 
  Noise is added to the precision matrix.}
  \label{tab:simDataNoiseOnPrecisionBalanced}
  \begin{tabular}{lllllll}
  \toprule 
      & \multicolumn{6}{c}{$\Sigma_j \sim \text{InvW}(p_j + 1, I_{p_j}), \Sigma_{\epsilon} \sim \text{InvW}(p + 1, I_{p}), \eta = 0.01$} \\
  \cmidrule{2-7}
       &  20 & 40 & 400 & 4000 & 40000 & 4000000 \\
    \midrule
Proposed ($\beta = 0.01$) &     0.44 (0.07) & 0.86 (0.06) & 1.0 (0.0) & 1.0 (0.0) & 1.0 (0.0) & 1.0 (0.0) \\ 
Proposed ($\beta = 0.02$) &     0.41 (0.06) & 0.86 (0.06) & 1.0 (0.0) & 1.0 (0.0) & 1.0 (0.0) & 0.99 (0.03) \\ 
Proposed ($\beta = 0.03$) &     0.38 (0.06) & 0.8 (0.06) & 1.0 (0.0) & 1.0 (0.0) & 1.0 (0.0) & 0.99 (0.03) \\ 
basic inverse Wishart prior         &         0.45 (0.07) & 0.89 (0.02) & 1.0 (0.0) & 1.0 (0.0) & 0.41 (0.04) & 0.39 (0.0) \\ 
EBIC ($\gamma = 0$)      &      0.02 (0.02) & 0.82 (0.07) & 1.0 (0.0) & 1.0 (0.0) & 0.41 (0.04) & 0.39 (0.0) \\ 
EBIC ($\gamma = 0.5$)            &      0.25 (0.08) & 0.32 (0.07) & 0.98 (0.04) & 1.0 (0.0) & 0.48 (0.13) & 0.39 (0.0) \\ 
EBIC ($\gamma = 1.0$)            &      0.23 (0.07) & 0.32 (0.07) & 0.96 (0.06) & 1.0 (0.0) & 0.66 (0.14) & 0.39 (0.0) \\ 
AIC &   0.0 (0.01) & 0.54 (0.44) & 1.0 (0.0) & 0.39 (0.0) & 0.41 (0.04) & 0.39 (0.0) \\ 
Calinski-Harabaz Index &        0.26 (0.09) & 0.3 (0.16) & 0.93 (0.1) & 0.95 (0.11) & 0.89 (0.13) & 0.84 (0.13) \\ 
CGL (ALC) &     0.01 (0.02) & 0.02 (0.05) & 0.04 (0.05) & 0.03 (0.02) & 0.05 (0.06) & 0.02 (0.02) \\
DPVC            & 	0.33 (0.07) & 0.42 (0.08) & 0.59 (0.16) & 0.21 (0.18) & NA & NA \\ 
\midrule
    & \multicolumn{6}{c}{$\Sigma_j \sim \text{InvW}(p_j + 1, I_{p_j}), \Sigma_{\epsilon} \sim \text{InvW}(p + 1, I_{p}), \eta = 0.1$} \\
       \midrule
Proposed ($\beta = 0.01$) &     0.1 (0.1) & 0.4 (0.09) & 0.93 (0.1) & 0.39 (0.0) & 0.33 (0.17) & 0.29 (0.15) \\ 
Proposed ($\beta = 0.02$) &     0.13 (0.09) & 0.41 (0.07) & 0.97 (0.04) & 0.95 (0.11) & 1.0 (0.0) & 0.99 (0.03) \\ 
Proposed ($\beta = 0.03$) &     0.13 (0.09) & 0.4 (0.09) & 0.95 (0.04) & 0.99 (0.03) & 1.0 (0.0) & 0.99 (0.03) \\ 
basic inverse Wishart prior         &         0.1 (0.1) & 0.4 (0.09) & 0.93 (0.1) & 0.23 (0.19) & 0.18 (0.21) & 0.23 (0.19) \\ 
EBIC ($\gamma = 0$)      &      0.09 (0.09) & 0.29 (0.06) & 0.94 (0.05) & 0.31 (0.15) & 0.18 (0.21) & 0.23 (0.19) \\ 
EBIC ($\gamma = 0.5$)            &      0.12 (0.05) & 0.2 (0.02) & 0.87 (0.02) & 0.41 (0.04) & 0.18 (0.21) & 0.23 (0.19) \\ 
EBIC ($\gamma = 1.0$)            &      0.14 (0.06) & 0.2 (0.02) & 0.54 (0.07) & 0.86 (0.24) & 0.18 (0.21) & 0.23 (0.19) \\ 
AIC &   -0.0 (0.0) & 0.0 (0.01) & 0.09 (0.15) & 0.23 (0.19) & 0.18 (0.21) & 0.23 (0.19) \\ 
Calinski-Harabaz Index &        0.11 (0.05) & 0.15 (0.13) & 0.94 (0.05) & 0.99 (0.03) & 1.0 (0.0) & 0.99 (0.03) \\ 
CGL (ALC) &     0.02 (0.03) & 0.0 (0.01) & 0.01 (0.01) & 0.01 (0.02) & 0.0 (0.0) & 0.0 (0.0) \\ 
DPVC            & 	0.11 (0.06) & 0.16 (0.06) & 0.27 (0.06) & 0.04 (0.04) & NA & NA \\
       \midrule
    & \multicolumn{6}{c}{$\Sigma_j \sim \text{Uniform}_{p_j}, \Sigma_{\epsilon} \sim \text{Uniform}_{p}, \eta = 0.01$} \\
       \midrule
Proposed ($\beta = 0.01$) &     0.1 (0.04) & 0.45 (0.05) & 0.92 (0.06) & 0.99 (0.03) & 0.99 (0.03) & 0.93 (0.1) \\ 
Proposed ($\beta = 0.02$) &     0.12 (0.03) & 0.43 (0.06) & 0.92 (0.06) & 0.99 (0.03) & 0.99 (0.03) & 0.93 (0.1) \\ 
Proposed ($\beta = 0.03$) &     0.13 (0.02) & 0.39 (0.03) & 0.89 (0.07) & 0.99 (0.03) & 0.99 (0.03) & 0.93 (0.1) \\ 
basic inverse Wishart prior         &         0.11 (0.06) & 0.65 (0.12) & 0.94 (0.06) & 0.88 (0.12) & 0.3 (0.28) & 0.46 (0.14) \\ 
EBIC ($\gamma = 0$)      &      0.06 (0.04) & 0.78 (0.14) & 0.92 (0.1) & 0.81 (0.23) & 0.3 (0.28) & 0.46 (0.14) \\ 
EBIC ($\gamma = 0.5$)            &      0.1 (0.03) & 0.44 (0.06) & 0.94 (0.06) & 0.99 (0.03) & 0.3 (0.28) & 0.46 (0.14) \\ 
EBIC ($\gamma = 1.0$)            &      0.1 (0.03) & 0.39 (0.03) & 0.94 (0.06) & 0.99 (0.03) & 0.3 (0.28) & 0.46 (0.14) \\ 
AIC &   0.06 (0.04) & 0.24 (0.33) & 0.35 (0.43) & 0.44 (0.15) & 0.3 (0.28) & 0.46 (0.14) \\ 
Calinski-Harabaz Index &        0.14 (0.06) & 0.54 (0.33) & 0.57 (0.35) & 0.76 (0.21) & 0.59 (0.29) & 0.66 (0.14) \\ 
CGL (ALC) &     0.0 (0.01) & 0.01 (0.01) & 0.24 (0.18) & 0.39 (0.0) & 0.35 (0.08) & 0.39 (0.0) \\ 
DPVC            & 	-0.01 (0.01) & 0.06 (0.07) & 0.29 (0.22) & 0.44 (0.2) & NA & NA \\
       \midrule
      & \multicolumn{6}{c}{$\Sigma_j \sim \text{Uniform}_{p_j}, \Sigma_{\epsilon} \sim \text{Uniform}_{p}, \eta = 0.1$} \\
       \midrule
Proposed ($\beta = 0.01$) &     0.11 (0.02) & 0.45 (0.05) & 0.88 (0.07) & 0.79 (0.21) & 0.56 (0.34) & 0.64 (0.22) \\ 
Proposed ($\beta = 0.02$) &     0.14 (0.04) & 0.4 (0.02) & 0.86 (0.07) & 0.9 (0.07) & 0.56 (0.34) & 0.64 (0.22) \\ 
Proposed ($\beta = 0.03$) &     0.14 (0.04) & 0.39 (0.03) & 0.86 (0.07) & 0.9 (0.07) & 0.56 (0.34) & 0.64 (0.22) \\ 
basic inverse Wishart prior         &         0.13 (0.04) & 0.52 (0.07) & 0.88 (0.07) & 0.42 (0.33) & 0.15 (0.19) & 0.23 (0.19) \\ 
EBIC ($\gamma = 0$)      &      0.12 (0.06) & 0.7 (0.1) & 0.78 (0.22) & 0.42 (0.33) & 0.15 (0.19) & 0.16 (0.19) \\ 
EBIC ($\gamma = 0.5$)            &      0.13 (0.04) & 0.44 (0.05) & 0.88 (0.07) & 0.48 (0.26) & 0.15 (0.19) & 0.16 (0.19) \\ 
EBIC ($\gamma = 1.0$)            &      0.12 (0.05) & 0.39 (0.03) & 0.88 (0.07) & 0.6 (0.3) & 0.15 (0.19) & 0.16 (0.19) \\ 
AIC &   0.12 (0.06) & 0.2 (0.17) & 0.06 (0.12) & 0.42 (0.33) & 0.15 (0.19) & 0.16 (0.19) \\ 
Calinski-Harabaz Index &        0.17 (0.06) & 0.48 (0.29) & 0.28 (0.34) & 0.9 (0.07) & 0.49 (0.27) & 0.63 (0.22) \\ 
CGL (ALC) &     0.01 (0.01) & 0.07 (0.08) & 0.31 (0.15) & 0.39 (0.0) & 0.33 (0.11) & 0.38 (0.02) \\
DPVC            & 	-0.0 (0.0) & 0.1 (0.09) & 0.35 (0.12) & 0.19 (0.18) & NA & NA \\
       \bottomrule
    \end{tabular}
 \end{table*}

 \begin{table*}[h]
\footnotesize
  \caption{Evaluation of clustering results for $p = 40$, 
  $n \in \{20, 40, 400, 4000, 40000, 4000000\}$. Ground truth is 4 unbalanced clusters with sizes 20, 10, 5, 5. 
  Shows the ANMI of the selected models (standard deviation in brackets).  No noise is added.}
  \label{tab:simDataNoNoiseUnbalanced}
  \begin{tabular}{lllllll}
  \toprule
    & \multicolumn{6}{c}{$\Sigma_j \sim \text{InvW}(p_j + 1, I_{p_j})$, no noise} \\
  \cmidrule{2-7}
   &  20 & 40 & 400 & 4000 & 40000 & 4000000 \\
    \midrule
Proposed ($\beta = 0.01$) &     0.49 (0.15) & 0.84 (0.11) & 1.0 (0.0) & 1.0 (0.0) & 1.0 (0.0) & 1.0 (0.0) \\ 
Proposed ($\beta = 0.02$) &     0.47 (0.17) & 0.84 (0.11) & 0.99 (0.02) & 1.0 (0.0) & 1.0 (0.0) & 1.0 (0.0) \\ 
Proposed ($\beta = 0.03$) &     0.42 (0.19) & 0.82 (0.13) & 0.99 (0.02) & 1.0 (0.0) & 1.0 (0.0) & 1.0 (0.0) \\ 
basic inverse Wishart prior         &         0.5 (0.15) & 0.84 (0.12) & 1.0 (0.0) & 1.0 (0.0) & 1.0 (0.0) & 1.0 (0.0) \\ 
EBIC ($\gamma = 0$)      &      0.2 (0.17) & 0.8 (0.12) & 1.0 (0.0) & 1.0 (0.0) & 1.0 (0.0) & 1.0 (0.0) \\ 
EBIC ($\gamma = 0.5$)            &      0.24 (0.05) & 0.37 (0.05) & 0.99 (0.02) & 1.0 (0.0) & 1.0 (0.0) & 1.0 (0.0) \\ 
EBIC ($\gamma = 1.0$)            &      0.23 (0.06) & 0.32 (0.04) & 0.99 (0.02) & 1.0 (0.0) & 1.0 (0.0) & 1.0 (0.0) \\ 
AIC &   0.15 (0.19) & 0.16 (0.12) & 1.0 (0.0) & 1.0 (0.0) & 1.0 (0.0) & 1.0 (0.0) \\ 
Calinski-Harabaz Index &        0.17 (0.09) & 0.17 (0.23) & 0.46 (0.27) & 0.45 (0.23) & 0.47 (0.19) & 0.4 (0.14) \\ 
CGL (ALC) &     0.07 (0.11) & 0.03 (0.04) & 0.05 (0.07) & 0.03 (0.03) & 0.07 (0.07) & 0.05 (0.06) \\ 
DPVC            & 	0.57 (0.13) & 0.66 (0.07) & 0.64 (0.14) & 0.87 (0.17) & NA & NA \\ 
\midrule
    & \multicolumn{6}{c}{$\Sigma_j \sim \text{Uniform}_{p_j}$, no noise} \\
       \midrule
Proposed ($\beta = 0.01$) &     0.15 (0.03) & 0.33 (0.03) & 0.87 (0.1) & 0.98 (0.03) & 1.0 (0.0) & 0.98 (0.03) \\ 
Proposed ($\beta = 0.02$) &     0.15 (0.03) & 0.33 (0.03) & 0.87 (0.1) & 0.97 (0.04) & 1.0 (0.0) & 0.97 (0.04) \\ 
Proposed ($\beta = 0.03$) &     0.16 (0.03) & 0.31 (0.03) & 0.67 (0.18) & 0.97 (0.04) & 0.98 (0.03) & 0.97 (0.04) \\
basic inverse Wishart prior         &         0.17 (0.05) & 0.33 (0.02) & 1.0 (0.0) & 1.0 (0.0) & 1.0 (0.0) & 1.0 (0.0) \\ 
EBIC ($\gamma = 0$)      &      0.08 (0.09) & 0.6 (0.23) & 1.0 (0.0) & 1.0 (0.0) & 1.0 (0.0) & 1.0 (0.0) \\ 
EBIC ($\gamma = 0.5$)            &      0.16 (0.03) & 0.33 (0.04) & 0.98 (0.03) & 1.0 (0.0) & 1.0 (0.0) & 1.0 (0.0) \\ 
EBIC ($\gamma = 1.0$)            &      0.16 (0.03) & 0.31 (0.03) & 0.91 (0.12) & 1.0 (0.0) & 1.0 (0.0) & 1.0 (0.0) \\ 
AIC &   0.08 (0.08) & 0.52 (0.33) & 1.0 (0.0) & 1.0 (0.0) & 1.0 (0.0) & 1.0 (0.0) \\ 
Calinski-Harabaz Index &        0.16 (0.06) & 0.53 (0.3) & 0.64 (0.15) & 0.63 (0.28) & 0.88 (0.17) & 0.96 (0.08) \\ 
CGL (ALC) &     -0.01 (0.01) & -0.01 (0.0) & -0.0 (0.01) & 0.15 (0.16) & 0.15 (0.21) & 0.12 (0.06) \\ 
DPVC            & 	0.02 (0.01) & 0.0 (0.04) & 0.23 (0.14) & 0.25 (0.13) & NA & NA \\ 
       \bottomrule
    \end{tabular}
 \end{table*}

 \begin{table*}[h]
\footnotesize
  \caption{Evaluation of clustering results with $p = 40$, 
  $n \in \{20, 40, 400, 4000, 40000, 4000000\}$. Ground truth is 4 unbalanced clusters with sizes 20, 10, 5, 5. 
  Shows the ANMI of the selected models (standard deviation in brackets). 
  Noise is added to the precision matrix.}
  \label{tab:simDataNoiseOnPrecisionUnbalanced}
  \begin{tabular}{lllllll}
  \toprule 
      & \multicolumn{6}{c}{$\Sigma_j \sim \text{InvW}(p_j + 1, I_{p_j}), \Sigma_{\epsilon} \sim \text{InvW}(p + 1, I_{p}), \eta = 0.01$} \\
  \cmidrule{2-7}
       &  20 & 40 & 400 & 4000 & 40000 & 4000000 \\
    \midrule
Proposed ($\beta = 0.01$) &     0.45 (0.14) & 0.75 (0.15) & 1.0 (0.0) & 1.0 (0.0) & 1.0 (0.0) & 1.0 (0.0) \\ 
Proposed ($\beta = 0.02$) &     0.39 (0.09) & 0.75 (0.15) & 1.0 (0.0) & 1.0 (0.0) & 1.0 (0.0) & 0.98 (0.03) \\ 
Proposed ($\beta = 0.03$) &     0.39 (0.09) & 0.7 (0.18) & 1.0 (0.0) & 0.97 (0.06) & 1.0 (0.0) & 0.98 (0.03) \\ 
basic inverse Wishart prior         &         0.48 (0.15) & 0.8 (0.09) & 1.0 (0.0) & 0.91 (0.11) & 0.39 (0.13) & 0.42 (0.12) \\ 
EBIC ($\gamma = 0$)      &      0.12 (0.08) & 0.67 (0.12) & 1.0 (0.0) & 0.91 (0.11) & 0.48 (0.17) & 0.42 (0.12) \\ 
EBIC ($\gamma = 0.5$)            &      0.19 (0.08) & 0.32 (0.04) & 0.97 (0.03) & 1.0 (0.0) & 0.54 (0.26) & 0.42 (0.12) \\ 
EBIC ($\gamma = 1.0$)            &      0.17 (0.07) & 0.28 (0.07) & 0.96 (0.03) & 1.0 (0.0) & 0.68 (0.24) & 0.42 (0.12) \\ 
AIC &   0.06 (0.09) & 0.3 (0.34) & 1.0 (0.0) & 0.4 (0.1) & 0.39 (0.13) & 0.42 (0.12) \\ 
Calinski-Harabaz Index &        0.2 (0.06) & 0.13 (0.2) & 0.45 (0.27) & 0.59 (0.17) & 0.7 (0.21) & 0.77 (0.03) \\ 
CGL (ALC) &     0.08 (0.06) & 0.05 (0.03) & 0.04 (0.03) & 0.03 (0.02) & 0.03 (0.02) & 0.04 (0.04) \\ 
DPVC            & 	0.28 (0.04) & 0.35 (0.07) & 0.57 (0.08) & 0.4 (0.12) & NA & NA \\ 
\midrule
    & \multicolumn{6}{c}{$\Sigma_j \sim \text{InvW}(p_j + 1, I_{p_j}), \Sigma_{\epsilon} \sim \text{InvW}(p + 1, I_{p}), \eta = 0.1$} \\
       \midrule
Proposed ($\beta = 0.01$) &     0.09 (0.11) & 0.42 (0.12) & 0.84 (0.1) & 0.42 (0.16) & 0.18 (0.22) & 0.24 (0.18) \\ 
Proposed ($\beta = 0.02$) &     0.09 (0.11) & 0.42 (0.13) & 0.88 (0.11) & 0.85 (0.15) & 0.99 (0.02) & 0.9 (0.09) \\ 
Proposed ($\beta = 0.03$) &     0.15 (0.06) & 0.42 (0.13) & 0.89 (0.09) & 0.92 (0.07) & 0.99 (0.02) & 0.9 (0.09) \\ 
basic inverse Wishart prior         &         0.11 (0.14) & 0.42 (0.13) & 0.84 (0.1) & 0.2 (0.2) & 0.0 (0.01) & 0.1 (0.17) \\ 
EBIC ($\gamma = 0$)      &      0.04 (0.05) & 0.24 (0.06) & 0.88 (0.11) & 0.2 (0.2) & 0.0 (0.01) & 0.1 (0.17) \\ 
EBIC ($\gamma = 0.5$)            &      0.05 (0.02) & 0.19 (0.04) & 0.74 (0.19) & 0.44 (0.17) & 0.0 (0.01) & 0.1 (0.17) \\ 
EBIC ($\gamma = 1.0$)            &      0.05 (0.02) & 0.19 (0.04) & 0.41 (0.06) & 0.78 (0.12) & 0.0 (0.01) & 0.1 (0.17) \\ 
AIC &   -0.01 (0.01) & 0.15 (0.21) & 0.19 (0.2) & 0.2 (0.2) & 0.0 (0.01) & 0.1 (0.17) \\ 
Calinski-Harabaz Index &        0.06 (0.03) & 0.17 (0.11) & 0.68 (0.25) & 0.67 (0.2) & 0.83 (0.17) & 0.76 (0.04) \\ 
CGL (ALC) &     0.04 (0.04) & 0.03 (0.02) & 0.05 (0.06) & 0.1 (0.11) & 0.05 (0.07) & 0.08 (0.09) \\ 
DPVC            & 	0.13 (0.05) & 0.16 (0.05) & 0.3 (0.13) & 0.07 (0.03) & NA & NA \\ 
       \midrule
    & \multicolumn{6}{c}{$\Sigma_j \sim \text{Uniform}_{p_j}, \Sigma_{\epsilon} \sim \text{Uniform}_{p}, \eta = 0.01$} \\
       \midrule
Proposed ($\beta = 0.01$) &     0.11 (0.02) & 0.32 (0.04) & 0.74 (0.15) & 0.83 (0.1) & 0.59 (0.32) & 0.5 (0.33) \\ 
Proposed ($\beta = 0.02$) &     0.11 (0.02) & 0.32 (0.04) & 0.61 (0.17) & 0.83 (0.1) & 0.59 (0.32) & 0.59 (0.32) \\
Proposed ($\beta = 0.03$) &     0.11 (0.02) & 0.32 (0.04) & 0.43 (0.06) & 0.83 (0.1) & 0.59 (0.32) & 0.59 (0.32) \\ 
basic inverse Wishart prior         &         0.11 (0.02) & 0.32 (0.04) & 0.84 (0.05) & 0.28 (0.0) & 0.11 (0.14) & 0.17 (0.23) \\ 
EBIC ($\gamma = 0$)      &      0.18 (0.13) & 0.43 (0.05) & 0.76 (0.13) & 0.22 (0.12) & 0.11 (0.14) & 0.06 (0.11) \\ 
EBIC ($\gamma = 0.5$)            &      0.11 (0.02) & 0.32 (0.04) & 0.84 (0.05) & 0.51 (0.3) & 0.11 (0.14) & 0.06 (0.11) \\ 
EBIC ($\gamma = 1.0$)            &      0.11 (0.02) & 0.32 (0.04) & 0.79 (0.13) & 0.67 (0.24) & 0.11 (0.14) & 0.06 (0.11) \\ 
AIC &   0.14 (0.05) & 0.16 (0.28) & 0.17 (0.23) & 0.22 (0.12) & 0.09 (0.12) & 0.06 (0.11) \\ 
Calinski-Harabaz Index &        0.14 (0.08) & 0.32 (0.3) & 0.34 (0.33) & 0.68 (0.22) & 0.25 (0.27) & 0.41 (0.32) \\ 
CGL (ALC) &     -0.01 (0.0) & -0.01 (0.0) & 0.01 (0.04) & -0.01 (0.01) & 0.02 (0.02) & 0.01 (0.01) \\ 
DPVC            & 	0.01 (0.01) & 0.03 (0.06) & 0.2 (0.05) & 0.01 (0.02) & NA & NA \\ 
       \midrule
      & \multicolumn{6}{c}{$\Sigma_j \sim \text{Uniform}_{p_j}, \Sigma_{\epsilon} \sim \text{Uniform}_{p}, \eta = 0.1$} \\
       \midrule
Proposed ($\beta = 0.01$) &     0.1 (0.02) & 0.34 (0.07) & 0.68 (0.18) & 0.6 (0.31) & 0.09 (0.12) & 0.06 (0.11) \\ 
Proposed ($\beta = 0.02$) &     0.11 (0.02) & 0.34 (0.07) & 0.65 (0.21) & 0.7 (0.13) & 0.21 (0.21) & 0.28 (0.26) \\ 
Proposed ($\beta = 0.03$) &     0.11 (0.02) & 0.32 (0.06) & 0.58 (0.2) & 0.7 (0.13) & 0.32 (0.22) & 0.28 (0.26) \\ 
basic inverse Wishart prior         &         0.14 (0.03) & 0.37 (0.08) & 0.78 (0.1) & 0.0 (0.02) & 0.09 (0.12) & 0.06 (0.11) \\ 
EBIC ($\gamma = 0$)      &      0.16 (0.05) & 0.49 (0.21) & 0.71 (0.14) & 0.0 (0.02) & 0.09 (0.12) & 0.06 (0.11) \\ 
EBIC ($\gamma = 0.5$)            &      0.11 (0.01) & 0.36 (0.08) & 0.77 (0.13) & 0.06 (0.11) & 0.09 (0.12) & 0.06 (0.11) \\ 
EBIC ($\gamma = 1.0$)            &      0.11 (0.01) & 0.31 (0.05) & 0.7 (0.16) & 0.12 (0.14) & 0.09 (0.12) & 0.06 (0.11) \\ 
AIC &   0.15 (0.05) & 0.05 (0.12) & 0.06 (0.11) & 0.0 (0.02) & 0.09 (0.12) & 0.06 (0.11) \\ 
Calinski-Harabaz Index &        0.16 (0.05) & 0.29 (0.26) & 0.42 (0.23) & 0.45 (0.38) & 0.09 (0.12) & 0.33 (0.31) \\ 
CGL (ALC) &     -0.0 (0.01) & -0.0 (0.01) & -0.01 (0.0) & -0.01 (0.0) & -0.01 (0.0) & -0.0 (0.01) \\ 
DPVC            & 	0.0 (0.04) & 0.03 (0.05) & 0.11 (0.13) & 0.02 (0.03) & NA & NA \\ 
       \bottomrule
    \end{tabular}
 \end{table*}

In Figures \ref{fig:clusterDistributionNoNoise} and \ref{fig:clusterDistributionWithNoise},  we show the posterior distribution with and without noise on the precision matrix, respectively.\footnote{Same setting as before, $p = 40$, $\Sigma_j \sim \text{InvW}(p_j + 1, I_{p_j})$. Noise is $\Sigma_{\epsilon} \sim \text{InvW}(p + 1, I_{p}), \eta = 0.01$. Proposed method $\beta = 0.02$.}
In both cases, given that the sample size $n$ is large enough, the proposed method is able to estimate correctly the number of clusters. 
In contrast, the basic inverse Wishart prior model underestimates the number of clusters for large $n$ and existence of noise in the precision matrix.

\begin{figure*}[h]
  \centering
  \includegraphics[scale=0.4, trim=10cm 0cm 10cm 0cm]{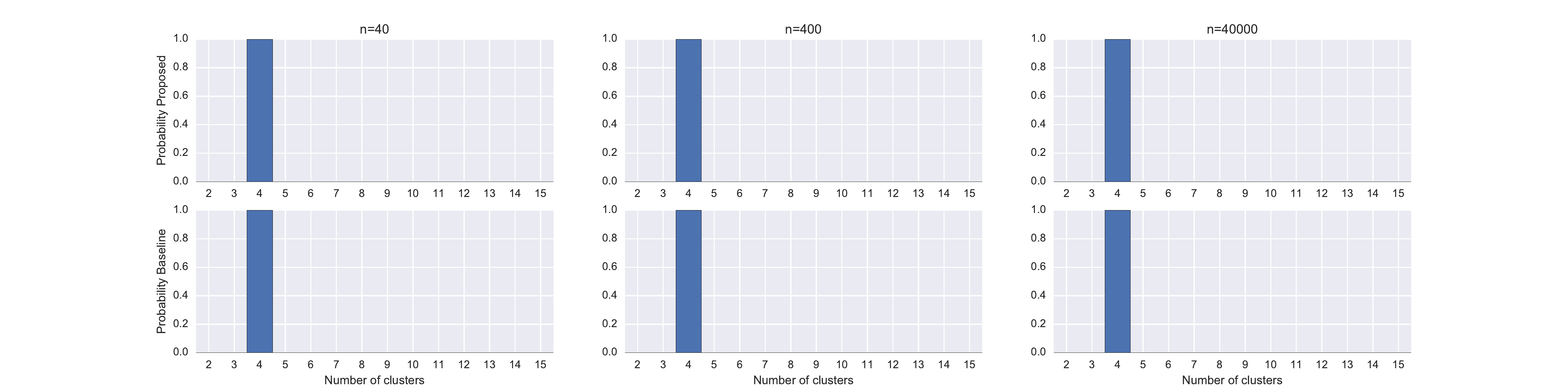}  
    \caption{Posterior distribution of the number of clusters of the proposed method (top row) and the basic inverse Wishart prior model (bottom row). Ground truth is 4 clusters; there is no noise on the precision matrix. }
  \label{fig:clusterDistributionNoNoise}
\end{figure*} 

 \begin{figure*}[h]
  \centering
  \includegraphics[scale=0.4, trim=10cm 0cm 10cm 0cm]{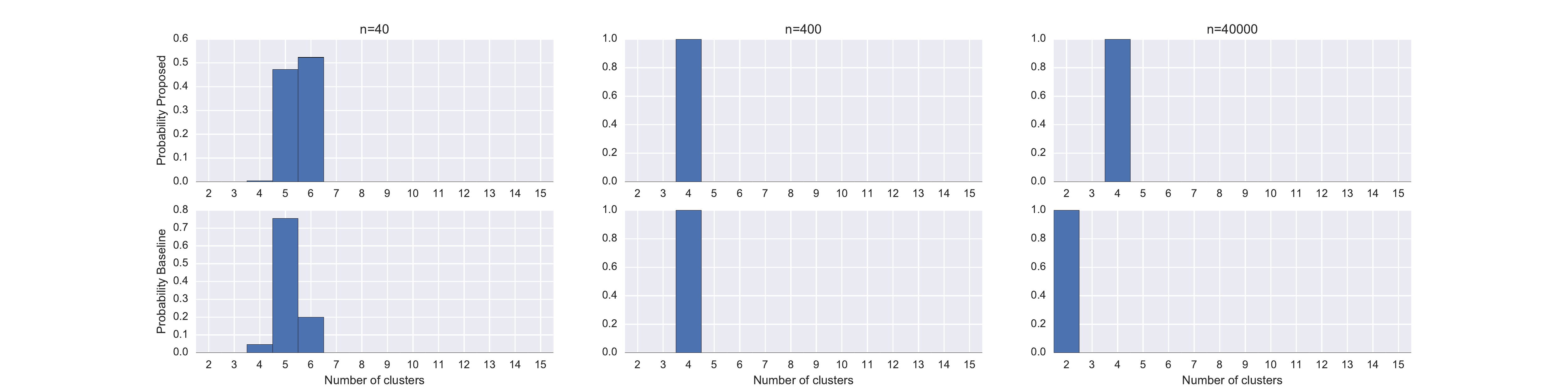}  
    \caption{Posterior distribution of the number of clusters of the proposed method (top row) and the basic inverse Wishart prior model (bottom row). Ground truth is 4 clusters; noise was added to the precision matrix. }
  \label{fig:clusterDistributionWithNoise}
\end{figure*}

\subsection{Comparison of variational and MCMC estimate}
Here, we compare our variational approximation with MCMC on a small scale simulated problem where it is computationally feasible to estimate the marginal likelihood with MCMC.
We generated synthetic data as in the previous section, only with the difference that we set the number of variables $p$ to 12. 

The number of samples $M$ for MCMC was set to 10000, where we used 10\% as burn in. 
For two randomly picked clusterings for $n = 12$, and $n = 1200000$, 
we checked the acceptance rates and convergence using the multivariate extension of the Gelman-Rubin diagnostic \citep{brooks1998general}.
The average acceptance rates were around $80\%$ and the potential scale reduction factor was 1.01.

The runtime of MCMC was around 40 minutes for evaluating one clustering, whereas for the variational approximation the runtime was around 2 seconds.\footnote{Runtime on one core of Intel(R) Xeon(R) CPU 2.30GHz.}
The results are shown in Table \ref{tab:simDataSmallComparisonVariationalMCMC}, suggesting that the quality of the selected clusterings using the variational approximation is similar to MCMC.

\begin{table*}[h]
\footnotesize
  \caption{Comparison of variational and MCMC estimate. Evaluation of clustering results for $p = 12$, 
  $n \in \{12, 120, 1200, 1200000\}$. Ground truth is 4 balanced clusters. $\beta = 0.02$.
  Shows the ANMI of the selected models (standard deviation in brackets). }
  \label{tab:simDataSmallComparisonVariationalMCMC}
  \begin{tabular}{lllll}
  \toprule
    & \multicolumn{4}{c}{$\Sigma_j \sim \text{InvW}(p_j + 1, I_{p_j})$, no noise} \\
  \cmidrule{2-5}
   &   12 & 120 & 1200 & 1200000\\
    \midrule
Proposed, variational &     0.39 (0.23) & 0.89 (0.09) & 0.96 (0.07) & 0.82 (0.11) \\ 
Proposed, MCMC & 0.37 (0.23) & 0.89 (0.09) & 0.96 (0.07) & 0.9 (0.14) \\
 basic inverse Wishart prior & 	0.39 (0.23) & 0.89 (0.09) & 1.0 (0.0) & 1.0 (0.0) \\  
\midrule
    & \multicolumn{4}{c}{$\Sigma_j \sim \text{Uniform}_{p_j}$, no noise} \\
  \cmidrule{2-5}
    Proposed, variational &     0.76 (0.17) & 1.0 (0.0) & 1.0 (0.0) & 1.0 (0.0) \\ 
    Proposed, MCMC & 0.66 (0.1) & 1.0 (0.0) & 1.0 (0.0) & 1.0 (0.0) \\ 
 basic inverse Wishart prior &	0.76 (0.17) & 1.0 (0.0) & 1.0 (0.0) & 1.0 (0.0)  \\
\midrule
& \multicolumn{4}{c}{$\Sigma_j \sim \text{InvW}(p_j + 1, I_{p_j}), \Sigma_{\epsilon} \sim \text{InvW}(p + 1, I_{p}), \eta = 0.01$} \\
  \cmidrule{2-5}
  Proposed, variational &     0.42 (0.27) & 0.8 (0.16) & 1.0 (0.0) & 0.96 (0.07) \\ 
  Proposed, MCMC & 0.17 (0.24) & 0.8 (0.16) & 1.0 (0.0) & 0.96 (0.07) \\ 
 basic inverse Wishart prior & 	0.42 (0.27) & 0.94 (0.12) & 0.93 (0.13) & 0.34 (0.04)  \\
\midrule
& \multicolumn{4}{c}{$\Sigma_j \sim \text{InvW}(p_j + 1, I_{p_j}), \Sigma_{\epsilon} \sim \text{InvW}(p + 1, I_{p}), \eta = 0.1$} \\
  \cmidrule{2-5}
Proposed, variational &     0.11 (0.16) & 0.57 (0.07) & 0.55 (0.26) & 0.78 (0.2) \\ 
Proposed, MCMC & 0.09 (0.06) & 0.61 (0.13) & 0.61 (0.23) & 0.78 (0.2) \\ 
 basic inverse Wishart prior & 	0.16 (0.15) & 0.54 (0.1) & 0.28 (0.15) & 0.21 (0.18) \\
 \midrule
   & \multicolumn{4}{c}{$\Sigma_j \sim \text{Uniform}_{p_j}, \Sigma_{\epsilon} \sim \text{Uniform}_{p}, \eta = 0.01$} \\
  \cmidrule{2-5}
Proposed, variational &     0.79 (0.12) & 0.82 (0.26) & 0.73 (0.33) & 0.96 (0.07) \\ 
Proposed, MCMC &   0.82 (0.11) & 0.96 (0.09) & 0.75 (0.31) & 0.96 (0.07) \\
 basic inverse Wishart prior & 0.79 (0.12) & 0.48 (0.15) & 0.28 (0.09) & 0.28 (0.09)  \\
 \midrule
 & \multicolumn{4}{c}{$\Sigma_j \sim \text{Uniform}_{p_j}, \Sigma_{\epsilon} \sim \text{Uniform}_{p}, \eta = 0.1$} \\
  \cmidrule{2-5}
  Proposed, variational &     0.67 (0.22) & 0.24 (0.24) & 0.32 (0.0) & 0.35 (0.18) \\
  Proposed, MCMC & 0.68 (0.17) & 0.24 (0.24) & 0.46 (0.27) & 0.35 (0.18) \\
 basic inverse Wishart prior & 0.69 (0.21) & 0.13 (0.11) & 0.26 (0.13) & 0.28 (0.09) \\ 
         \bottomrule
    \end{tabular}
 \end{table*}

\section{Real Data Experiments} \label{sec:realDataExperiments}

In this section, we investigate the properties of the proposed model selection criterion on three real data sets.
In all cases, we use the spectral clustering algorithm from Appendix \ref{app:spectralClustering} to create cluster candidates.
All variables were normalized to have mean 0 and variance 1.
For all methods, except DPVC, the number of clusters is considered to be in $\{2, 3, 4, \ldots, \min(p - 1,15) \}$. 
DPVC automatically selects the number of clusters by assuming a Dirichlet process prior.
We evaluated the proposed method with $\beta = 0.02$ using the variational approximation.

\subsection{Mutual Funds}

Here we use the mutual funds data, which has been previously analyzed in \citep{scott2008feature,marlin2009group}.
The data contains 59 mutual funds (p = 59) grouped into 4 clusters:
U.S. bond funds, U.S. stock funds, balanced funds (containing U.S. stocks and bonds), and international stock funds. 
The number of observations is 86.

The results of all methods are visualized in Table \ref{tab:mutualFundResults}. 
It is difficult to interpret the results produced by EBIC ($\gamma = 1.0$), AIC and the Calinski-Harabaz Index.
In contrast, the proposed method and EBIC ($\gamma = 0.0$) produce results that are easier to interpret.
In particular, our results suggest that there is a considerable correlation between the balanced funds and the U.S. stock funds which was also observed in \cite{marlin2009group}.

In Figure \ref{fig:mutualFundsDataVisualized} we show a two dimensional representation of the data, that was found using Laplacian Eigenmaps \citep{belkin2003laplacian}.
The figure supports the claim that balanced funds and the U.S. stock funds have similar behavior. 

\begin{figure*}[h]
  \centering
  \includegraphics[scale=0.5]{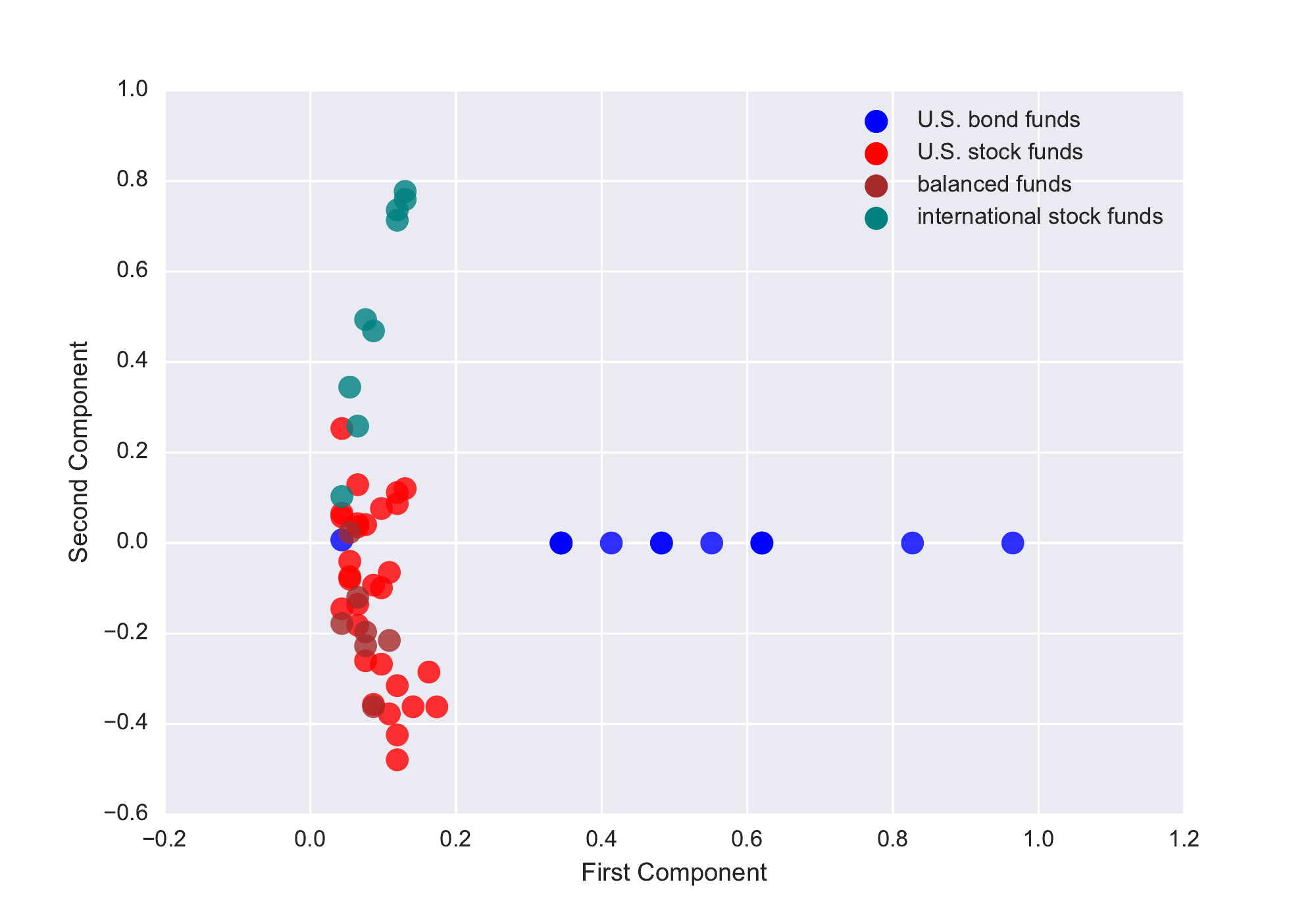} 
    \caption{Two dimensional representation of the mutual funds data. }
  \label{fig:mutualFundsDataVisualized}
\end{figure*}

 \begin{table*}[h]
  \caption{Evaluation of selected clusterings of the mutual funds data. Colors highlight the type of fund. Numbers denote the cluster id assigned by the respective method.
  Here the size of the restricted hypotheses space $|\mathscr{C}^*|$ found by spectral clustering was 128.} 
  \label{tab:mutualFundResults}
 \center
 \small
 \begin{tabular}{ll}
   \toprule
 \multicolumn{2}{c}{Proposed and EBIC ($\gamma = 0.0$) [number of clusters = 6, ANMI = 0.48]} \\
   \midrule
\color{blue}{U.S. bond funds} &  \textbf{ \color{blue}{ 2 2 2 2 2 2 2 4 2 2 2 2 2  } } \\
\color{red}{U.S. stock funds} &  \textbf{ \color{red}{ 1 1 1 1 1 1 1 1 1 1 1 1 1 1 1 1 1 1 1 1 1 1 1 1 1 1 5 1 4 6  } } \\
\color{brown}{balanced funds} &  \textbf{ \color{brown}{ 1 1 1 1 1 1 1  } } \\
\color{teal}{international stock funds} &  \textbf{ \color{teal}{ 1 3 1 1 3 1 3 3 1  } } \\
    \midrule
 \multicolumn{2}{c}{basic inverse Wishart prior [number of clusters = 3, ANMI = 0.42]} \\
   \midrule
\color{blue}{U.S. bond funds} &  \textbf{ \color{blue}{ 2 2 2 2 2 2 2 2 2 2 2 2 2  } } \\
\color{red}{U.S. stock funds} &  \textbf{ \color{red}{ 1 1 1 1 1 1 1 1 1 1 1 1 1 1 1 1 1 1 1 1 1 1 1 1 1 1 3 1 1 1  } } \\
\color{brown}{balanced funds} &  \textbf{ \color{brown}{ 1 1 1 1 1 1 1  } } \\
\color{teal}{international stock funds} &  \textbf{ \color{teal}{ 1 1 1 1 1 1 1 1 1  } } \\
    \midrule
 \multicolumn{2}{c}{EBIC ($\gamma = 0.5$) [number of clusters = 11, ANMI = 0.32] } \\
   \midrule
\color{blue}{U.S. bond funds} &  \textbf{ \color{blue}{ 2 9 2 9 2 2 2 1 10 9 2 2 2  } } \\
\color{red}{U.S. stock funds} &  \textbf{ \color{red}{ 7 11 7 11 7 11 7 7 11 5 7 11 5 1 8 7 11 5 5 5 5 5 5 5 8 5 4 8 8 6  } } \\
\color{brown}{balanced funds} &  \textbf{ \color{brown}{ 11 7 8 7 11 7 11  } } \\
\color{teal}{international stock funds} &  \textbf{ \color{teal}{ 1 3 1 1 3 1 3 3 3  } } \\
   \midrule
 \multicolumn{2}{c}{EBIC ($\gamma = 1.0$) [number of clusters = 14, ANMI = 0.25] } \\
   \midrule
\color{blue}{U.S. bond funds} &  \textbf{ \color{blue}{ 2 9 2 9 2 14 2 1 14 9 10 10 10  } } \\
\color{red}{U.S. stock funds} &  \textbf{ \color{red}{ 12 8 12 6 12 8 12 12 8 6 12 8 6 3 11 6 8 5 7 5 5 5 5 6 11 5 11 15 4 11  } } \\
\color{brown}{balanced funds} &  \textbf{ \color{brown}{ 8 12 1 12 8 6 7  } } \\
\color{teal}{international stock funds} &  \textbf{ \color{teal}{ 3 13 3 3 13 3 13 13 13  } } \\
  \midrule
 \multicolumn{2}{c}{AIC and Calinski-Harabaz Index [number of clusters = 2, ANMI = 0]} \\
    \midrule   
\color{blue}{U.S. bond funds} &  \textbf{ \color{blue}{ 1 1 1 1 1 1 1 1 1 1 1 1 1  } } \\
\color{red}{U.S. stock funds} &  \textbf{ \color{red}{ 1 1 1 1 1 1 1 1 1 1 1 1 1 1 1 1 1 1 1 1 1 1 1 1 1 1 2 1 1 1  } } \\
\color{brown}{balanced funds} &  \textbf{ \color{brown}{ 1 1 1 1 1 1 1  } } \\
\color{teal}{international stock funds} &  \textbf{ \color{teal}{ 1 1 1 1 1 1 1 1 1  } } \\
\midrule
 \multicolumn{2}{c}{CGL (ALC) [number of clusters = 3, ANMI = 0.36]} \\
 \midrule   
 \color{blue}{U.S. bond funds} &  \textbf{ \color{blue}{ 1 1 1 1 1 1 1 3 1 1 1 1 1  } } \\
\color{red}{U.S. stock funds} &  \textbf{ \color{red}{ 2 2 2 2 2 2 2 2 2 2 2 2 2 2 2 2 2 2 2 2 2 2 2 2 2 2 3 3 3 3  } } \\
\color{brown}{balanced funds} &  \textbf{ \color{brown}{ 2 2 2 2 3 2 2  } } \\
\color{teal}{international stock funds} &  \textbf{ \color{teal}{ 2 2 2 2 2 2 2 3 2  } } \\
 \midrule
 \multicolumn{2}{c}{DPVC [number of clusters = 2, ANMI = 0.35]} \\
 \midrule   
\color{blue}{U.S. bond funds} &  \textbf{ \color{blue}{ 1 1 1 1 1 1 1 2 1 1 1 1 1  } } \\
\color{red}{U.S. stock funds} &  \textbf{ \color{red}{ 2 2 2 2 2 2 2 2 2 2 2 2 2 2 2 2 2 2 2 2 2 2 2 2 2 2 2 2 2 2  } } \\
\color{brown}{balanced funds} &  \textbf{ \color{brown}{ 2 2 2 2 2 2 2  } } \\
\color{teal}{international stock funds} &  \textbf{ \color{teal}{ 2 2 2 2 2 2 2 2 2  } } \\
\bottomrule
 \end{tabular}
\end{table*}

\subsection{Gene Regulations}

We tested our method also on the gene expression data that was analyzed in \citep{hirose2017robust}. 
The data consists of 11 genes with 445 gene expressions. The true gene regularizations are known in this case and shown in Figure \ref{fig:gene_regularization_true}, adapted from \citep{hirose2017robust}. 
The most important fact is that there are two independent groups of genes and any clustering that mixes these two can be considered as wrong.

We show the results of all methods in Figure \ref{fig:gene_regularization_clustering_result}, where we mark each cluster with a different color superimposed on the true regularization structure.
Here only the clustering selected by the proposed method, EBIC ($\gamma = 1.0$) and Calinski-Harabaz correctly divide the two group of genes.

\begin{figure*}[h]
  \centering
  \includegraphics[page=1,scale=0.25]{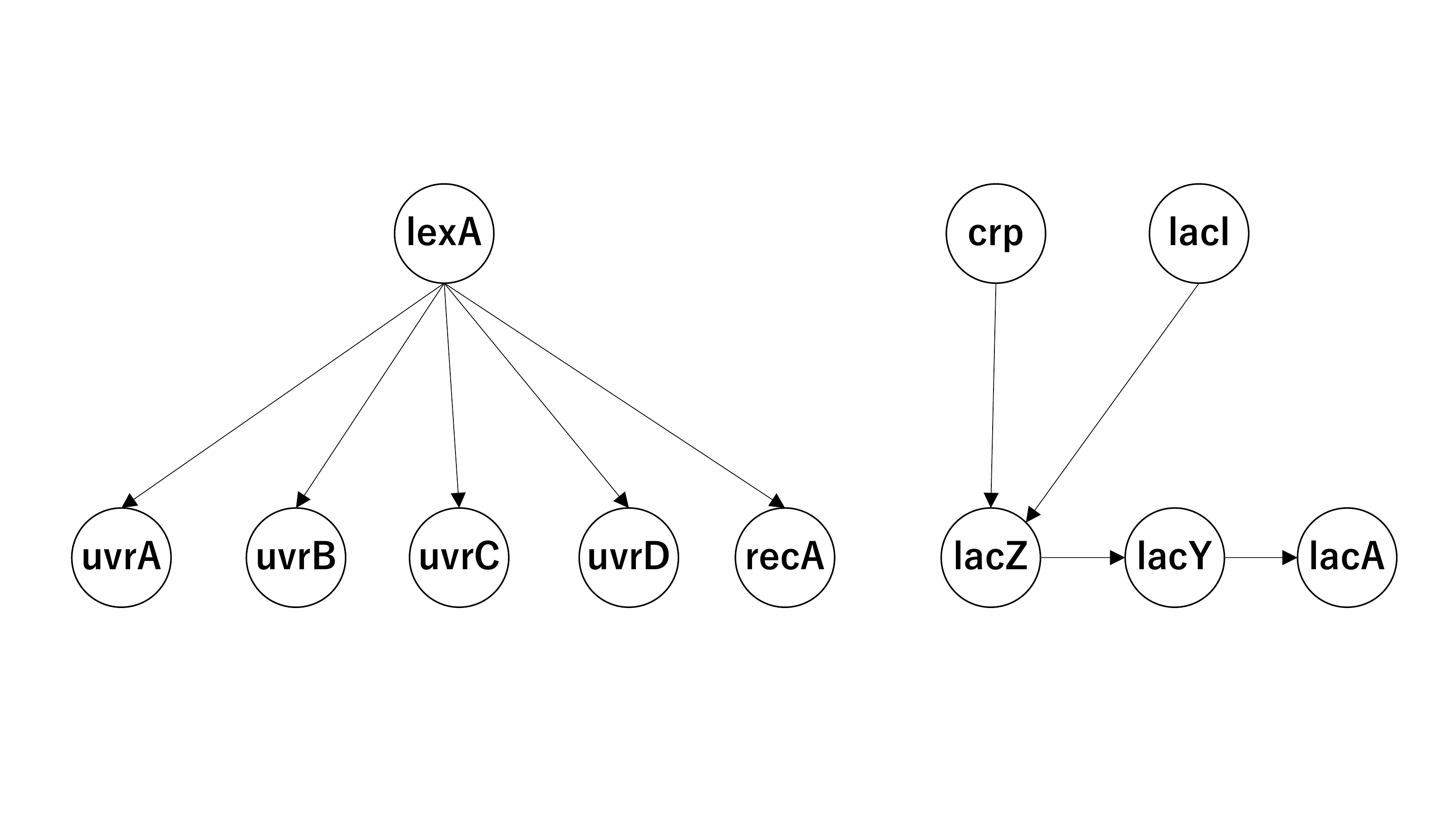} 
   \caption{Gene regulations of E. coli. as given in \citep{hirose2017robust,alberts2014molecular}}
  \label{fig:gene_regularization_true}
\end{figure*}

\begin{figure*}[h]
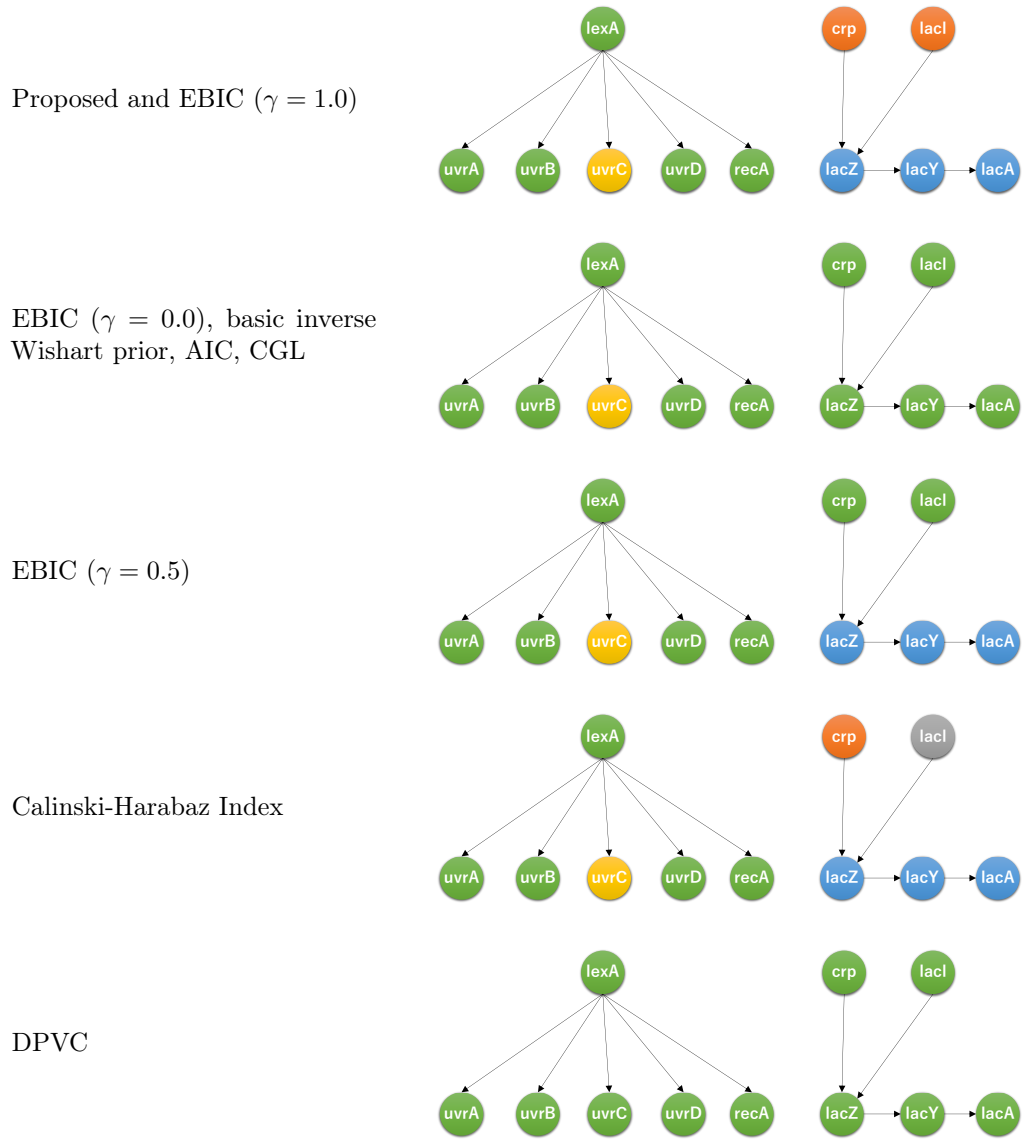

\centering
\begin{tabular}{*{2}{m{0.4\textwidth}}}
Proposed and EBIC ($\gamma = 1.0$) & \includegraphics[page=2,scale=0.25, clip=true, trim=0cm 4cm 0cm 3cm]{gene_regularization_network.pdf} \\
 EBIC ($\gamma = 0.0$), basic inverse Wishart prior, AIC, CGL & \includegraphics[page=3,scale=0.25, clip=true, trim=0cm 4cm 0cm 3cm]{gene_regularization_network.pdf} \\
EBIC ($\gamma = 0.5$) & \includegraphics[page=4,scale=0.25, clip=true, trim=0cm 4cm 0cm 3cm]{gene_regularization_network.pdf} \\
Calinski-Harabaz Index & \includegraphics[page=5,scale=0.25, clip=true, trim=0cm 4cm 0cm 3cm]{gene_regularization_network.pdf} \\
DPVC & \includegraphics[page=6,scale=0.25, clip=true, trim=0cm 4cm 0cm 3cm]{gene_regularization_network.pdf} \\
\end{tabular}
\caption{Clusterings of gene regulations network of E. coli. The clustering results are visualized by different colors. Here the size of the restricted hypotheses space $|\mathscr{C}^*|$ found by spectral clustering was 18.}
  \label{fig:gene_regularization_clustering_result}
\end{figure*}

\subsection{Aviation Sensors}

As a third data set, we use the flight aviation dataset from NASA\footnote{\url{https://c3.nasa.gov/dashlink/projects/85/} where we use all records from Tail 687.}.
The data set contains sensor information sampled from airplanes during operation.
We extracted the information of 16 continuous-valued sensors that were recorded for different flights with in total 25032364 samples.

The clustering results are shown in Table \ref{tab:aviationDataResults}.
The data set does not have any ground truth, but the clustering result of our proposed method is reasonable:
Cluster 9 groups sensors that measure or affect altitude\footnote{The elevator position of an airplane influences the altitude, and the static pressure system of an airplane measures the altitude.}, 
Cluster 8 correctly clusters the left and right sensors for measuring the rotation around the axis pointing through the noise of the aircraft, 
 in Cluster 2 all sensors that measure the angle between chord and flight direction are grouped together.
It also appears reasonable that the yellow hydraulic system of the left part of the plane has little direct interaction with the green hydraulic system of the right part (Cluster 1 and Cluster 4).
And the sensor for the rudder, influencing the direction of the plane, is mostly independent of the other sensors (Cluster 5).

In contrast, the clustering selected by the basic inverse Wishart prior, EBIC, and AIC is difficult to interpret.
We note that we did not compare to DPVC, since the large number of samples made the MCMC algorithm of DPVC infeasible. 

 \begin{table*}[h]
  \caption{Evaluation of selected clusterings of the Aviation Sensor Data with 16 variables. Here the size of the restricted hypotheses space $|\mathscr{C}^*|$ found by spectral clustering was 28.}
  \label{tab:aviationDataResults}
 \center
 \small
 \begin{tabular}{ll}
   \toprule
 \multicolumn{2}{c}{Proposed} \\
   \midrule
   \footnotesize Cluster 1 &  \footnotesize BRAKE PRESSURE LH YELLOW \\
      \midrule
\footnotesize Cluster 2 & \footnotesize INDICATED ANGLE OF ATTACK, ANGLE OF ATTACK 2, ANGLE OF ATTACK 1 \\
   \midrule
\footnotesize Cluster 3 & \footnotesize ROLL SPOILER RIGHT \\
   \midrule
\footnotesize Cluster 4 & \footnotesize BRAKE PRESSURE RH GREEN \\
   \midrule
\footnotesize Cluster 5 & \footnotesize RUDDER POSITION \\
   \midrule
\footnotesize Cluster 6 & \footnotesize AILERON POSITION RH, AILERON POSITION LH \\
   \midrule
\footnotesize Cluster 7 & \footnotesize ROLL SPOILER LEFT \\
   \midrule
\footnotesize Cluster 8 & \footnotesize PITCH TRIM POSITION \\
   \midrule
\footnotesize Cluster 9 & \footnotesize STATIC PRESSURE LSP, TOTAL PRESSURE LSP, AVARAGE STATIC PRESSURE LSP, \\
& \footnotesize ELEVATOR POSITION LEFT,ELEVATOR POSITION RIGHT \\
  \midrule 
\noalign{\vskip 4mm}  
 \multicolumn{2}{c}{basic inverse Wishart prior, EBIC ($\gamma \in \{0.0, 0.5, 1.0\}$), AIC} \\
   \midrule
   \footnotesize Cluster 1 & \footnotesize STATIC PRESSURE LSP, INDICATED ANGLE OF ATTACK, TOTAL PRESSURE LSP,\\
  & \footnotesize RUDDER POSITION, AILERON POSITION RH, AVARAGE STATIC PRESSURE LSP, \\
  & \footnotesize ELEVATOR POSITION LEFT, ELEVATOR POSITION RIGHT,  PITCH TRIM POSITION, \\
  & \footnotesize ANGLE OF ATTACK 2, ANGLE OF ATTACK 1, AILERON POSITION LH, ROLL SPOILER LEFT, \\
  & \footnotesize BRAKE PRESSURE LH YELLOW, ROLL SPOILER RIGHT \\
  \midrule  
\footnotesize Cluster 2 & \footnotesize BRAKE PRESSURE RH GREEN \\
  \midrule
  \noalign{\vskip 4mm}
 \multicolumn{2}{c}{Calinski-Harabaz Index} \\
   \midrule
 \footnotesize  Cluster 1 & \footnotesize STATIC PRESSURE LSP, TOTAL PRESSURE LSP, AILERON POSITION RH, \\
 & \footnotesize AVARAGE STATIC PRESSURE LSP, ELEVATOR POSITION LEFT, ELEVATOR POSITION RIGHT, \\
 & \footnotesize BRAKE PRESSURE RH GREEN, AILERON POSITION LH, BRAKE PRESSURE LH YELLOW \\
      \midrule
 \footnotesize  Cluster 2 & \footnotesize INDICATED ANGLE OF ATTACK, ANGLE OF ATTACK 2,  ANGLE OF ATTACK 1\\
      \midrule
\footnotesize Cluster 3 & \footnotesize RUDDER POSITION, PITCH TRIM POSITION, ROLL SPOILER LEFT, ROLL SPOILER RIGHT \\
  \midrule
  \noalign{\vskip 4mm}  
 \multicolumn{2}{c}{CGL (ALC)} \\
   \midrule
\footnotesize Cluster 1 & \footnotesize STATIC PRESSURE LSP, TOTAL PRESSURE LSP, AVARAGE STATIC PRESSURE LSP, \\
& \footnotesize ELEVATOR POSITION LEFT, ELEVATOR POSITION RIGHT, BRAKE PRESSURE LH YELLOW \\
\midrule
\footnotesize Cluster 2 & \footnotesize INDICATED ANGLE OF ATTACK, RUDDER POSITION, AILERON POSITION RH, \\
& \footnotesize PITCH TRIM POSITION, BRAKE PRESSURE RH GREEN, ANGLE OF ATTACK 2,  \\
& \footnotesize ANGLE OF ATTACK 1, AILERON POSITION LH, ROLL SPOILER LEFT, ROLL SPOILER RIGHT \\
\bottomrule
 \end{tabular}
\end{table*}

\section{Discussion and Conclusions} \label{sec:conclusions}
We have introduced a new method for evaluating variable clusterings based on the marginal likelihood of a Bayesian model that takes into account noise on the precision matrix.
Since the calculation of the marginal likelihood is analytically intractable, we proposed two approximations: a variational approximation and an approximation based on MCMC.
Experimentally, we found that the variational approximation is considerably faster than MCMC and also leads to accurate model selections.

We compared our proposed method to several standard model selection criteria. 
In particular, we compared to BIC and extended BIC (EBIC) which are often the method of choice for model selection in Gaussian graphical models.
However, we emphasize that EBIC was designed to handle the situation where $p$ is in the order of $n$, and has not been designed to handle noise.
As a consequence, our experiments showed that in practice its performance depends highly on the choice of the $\gamma$ parameter. 
In contrast, the proposed method, with fixed hyper-parameters, shows better performance on various simulated and real data.
 
We also compared our method to other two previously proposed methods, namely Cluster Graphical Lasso (CGL) \citep{tan2015cluster}, and Dirichlet Process Variable Clustering (DPVC) \citep{palla2012nonparametric} that performs jointly clustering and model selection.
However, it appears that in many situations the model selection algorithm of CGL is not able to detect the true model, even if there is no noise.
On the other hand, the Dirichlet process assumption by DPVC appears to be very restrictive, leading again to many situations where the true model (clustering) is missed.
Overall, our method performs better in terms of selecting the correct clustering on synthetic data with ground truth, and selects meaningful clusters on real data.

The python source code for variable clustering and model selection with the proposed method and all baselines is available at \url{https://github.com/andrade-stats/robustBayesClustering}.

\appendix
\section{Convergence of 3-block ADMM \label{app:convergence3BlockADMM}}
We can write the optimization problem in \eqref{eq:MAP_opt_problem} as 
\begin{align*}
& \text{minimize} \; f_1(X_{\epsilon}) + f_2(X_1, \ldots, X_k) + f_3(Z)  \\
& \text{subject to} \; \\
& \; - X - \beta X_{\epsilon} + Z = 0 \, , \\
& X_{\epsilon}, X_1, \ldots, X_k \succ 0\, ,
\end{align*}
with
\begin{align*}
 & f_1(X_{\epsilon}) :=  trace (A_{\epsilon} X_{\epsilon})  - a_{\epsilon} \cdot \log |X_{\epsilon}| \, ,\\
 & f_2(X_1, \ldots, X_k) :=  \sum_{j = 1}^{k}  \Big( trace (A_j X_j)  - a_j \cdot \log |X_j| \Big) \, ,\\
 & f_3(Z) :=  n \cdot trace (S Z)  - n \cdot \log |Z| \, .
 \end{align*}
 
First note that the functions $f_1, f_2$ and $f_3$ are convex proper closed functions.
Since $X_{\epsilon}, X_1, \ldots, X_k \succ 0$, we have due to the equality constraint that $Z \succ 0$.
Assuming that the global minima is attained, we can assume that $Z \preceq \sigma I$, for some large enough $\sigma > 0$.
As a consequence, we have that $\nabla^2 f_3(Z) = Z^{-1} \otimes Z^{-1} \succeq \sigma^{-2} I$, and therefore $f_3$ is a strongly convex function.
Analogously, we have that $f_1$ and $f_2$ are strongly convex functions, and therefore also coercive.
This allows us to apply Theorem 3.2 in \citep{lin2015global} which guarantees the convergence of the 3-block ADMM.

\section{Derivation of variational approximation} \label{app:derivationVA}

Here, we give more details of the KL-divergence minimization from Section \ref{sec:variationalApproximation}.
Recall, that the remaining parameters $\nu_{g, \epsilon} \in \mathbb{R}$ and $\nu_{g,j} \in \mathbb{R}$ are optimized by minimizing the 
KL-divergence between the the factorized distribution $g$ and the posterior distribution $p(\Sigma_{\epsilon}, \Sigma_{1}, \ldots \Sigma_{k} | \mathbf{x}_1, ..., \mathbf{x}_n, \boldsymbol{\eta}, \mathcal{C})$.  
We have
\begin{align*}
KL(g || p) &= - \int g_{\epsilon}(\Sigma_{\epsilon}) \cdot \prod_{j=1}^{k} g_{j}(\Sigma_{j}) \\
&\quad \log \frac{p(\Sigma_{\epsilon}, \Sigma_{1}, \ldots \Sigma_{k}, \mathbf{x}_1, ..., \mathbf{x}_n | \boldsymbol{\eta}, \mathcal{C})}{ g_{\epsilon}(\Sigma_{\epsilon}) \cdot \prod_{j=1}^{k} g_{j}(\Sigma_{j})} d \Sigma_{\epsilon} d\Sigma \\
&\quad + c \\
&= - \frac{1}{2}\E_{g_J, g_{\epsilon}}[n \cdot \log |(\Sigma^{-1} + \beta \Sigma_{\epsilon}^{-1})|] \\
&\quad - \frac{1}{2}\E_{g_{\epsilon}}[(\nu_{\epsilon} + p + 1) \cdot \log |\Sigma_{\epsilon}^{-1}| \\
&\quad- trace ((\Sigma_{\epsilon,0} + \beta nS) \Sigma_{\epsilon}^{-1}) ] - \text{Entropy}[g_{\epsilon}] \\
&\quad + \sum_{j = 1}^{k} \Big( - \frac{1}{2} \E_{g_j}[(\nu_{j} + p_j + 1) \cdot \log |\Sigma_{j}^{-1}| \\
&\quad- trace ((\Sigma_{j,0} + nS_j) \Sigma_{j}^{-1})] - \text{Entropy}[g_{j}] \Big) + c \\
&= - \frac{1}{2} n \E_{g_J, g_{\epsilon}}[\log |\Sigma^{-1} + \beta \Sigma_{\epsilon}^{-1}|] \\
&\quad + \frac{1}{2} (\nu_{\epsilon} + p + 1) \E_{g_{\epsilon}}[ \log |\Sigma_{\epsilon}| ] \\
&\quad+ \frac{1}{2} trace ((\Sigma_{\epsilon,0} + \beta nS) \E_{g_{\epsilon}}[ \Sigma_{\epsilon}^{-1} ] ) - \text{Entropy}[g_{\epsilon}] \\
&\quad + \frac{1}{2} \sum_{j = 1}^{k}  (\nu_{j} + p_j + 1) \E_{g_j}[ \log |\Sigma_{j}|] \\
&\quad+ \frac{1}{2} \sum_{j = 1}^{k}  trace ((\Sigma_{j,0} + nS_j) \E_{g_j}[\Sigma_{j}^{-1}]) \\
&\quad- \sum_{j = 1}^{k} \text{Entropy}[g_{j}] + c \, ,
\end{align*}
where $c$ is a constant with respect to $g_{\epsilon}$ and $g_j$.  
However, the term $E_{g_J, g_{\epsilon}}[\log |\Sigma^{-1} + \beta \Sigma_{\epsilon}^{-1}|]$ cannot be solved analytically, therefore we need to resort to some sort of approximation. 
Assuming that 
\begin{align*}
E_{g_J, g_{\epsilon}}[\log |\Sigma^{-1} + \beta \Sigma_{\epsilon}^{-1}|] \approx E_{g_J, g_{\epsilon}}[\log |\Sigma^{-1}|] \, , 
\end{align*}
we get 
\begin{align*}
KL(g || p) &\approx  - \frac{1}{2} n \E_{g_J, g_{\epsilon}}[\log |\Sigma^{-1}|] \\
&\quad + \frac{1}{2} (\nu_{\epsilon} + p + 1) \E_{g_{\epsilon}}[ \log |\Sigma_{\epsilon}| ] \\
&\quad+ \frac{1}{2} trace ((\Sigma_{\epsilon,0} + \beta nS) \E_{g_{\epsilon}}[ \Sigma_{\epsilon}^{-1} ] ) - \text{Entropy}[g_{\epsilon}] \\
&\quad + \frac{1}{2} \sum_{j = 1}^{k}  (\nu_{j} + p_j + 1) \E_{g_j}[ \log |\Sigma_{j}|] \\
&\quad+ \frac{1}{2} \sum_{j = 1}^{k}  trace ((\Sigma_{j,0} + nS_j) \E_{g_j}[\Sigma_{j}^{-1}]) \\
&\quad- \sum_{j = 1}^{k} \text{Entropy}[g_{j}] + c \\
&=  - \E_{g_{\epsilon}}[ \log  \Big( |\Sigma_{\epsilon}|^{- \frac{1}{2}  (\nu_{\epsilon} + p + 1)} \\
&\quad e^{- \frac{1}{2} trace ((\Sigma_{\epsilon,0} + \beta nS)  \Sigma_{\epsilon}^{-1} ) } \Big)] \\
&\quad - \text{Entropy}[g_{\epsilon}]  - \sum_{j = 1}^{k}  \E_{g_j}[ \log \Big( |\Sigma_{j}|^{- \frac{1}{2}(\nu_{j} + n + p_j + 1)} \\
&\quad e^{-\frac{1}{2} trace ((\Sigma_{j,0} + nS_j) \Sigma_{j}^{-1})} \Big)] + \text{Entropy}[g_{j}] + c \\
&=  - \E_{g_{\epsilon}}[ \log \text{InvW}(\nu_{\epsilon}, \Sigma_{\epsilon,0} + \beta nS)] \\
&\quad- \text{Entropy}[g_{\epsilon}] \\
&\quad- \sum_{j = 1}^{k}  \E_{g_j}[ \log \text{InvW} (\nu_{j} + n,  \Sigma_{j,0} + nS_j) ] \\
&\quad  + \text{Entropy}[g_{j}] + c'  \\ 
&=  KL(g_{\epsilon} \, || \, \text{InvW}(\nu_{\epsilon}, \Sigma_{\epsilon,0} + \beta nS)) \\
&\quad + \sum_{j = 1}^{k}  KL(g_j \, || \, \text{InvW} (\nu_{j} + n,  \Sigma_{j,0} + nS_j)) \\
&\quad+ c' \, ,
\end{align*}
where we used that $\E_{g_J, g_{\epsilon}}[\log |\Sigma^{-1}|]$ \\
$= - \sum_{j = 1}^{k}  \E_{g_j}[\log |\Sigma_j|]$, and $c'$ is a constant with respect to $g_{\epsilon}$ and $g_j$. 

From the above expression, we see that we can optimize the parameters of $g_{\epsilon}$ and $g_j$ independently from each other.
The optimal parameter $\hat{\nu}_{g, \epsilon}$ for $g_{\epsilon}$ is
\begin{align*}
\hat{\nu}_{g, \epsilon} &= \argmin_{\nu_{g, \epsilon}} KL(g_{\epsilon} \, || \, \text{InvW}(\nu_{\epsilon}, \Sigma_{\epsilon,0} + \beta nS)) \\
&= \argmin_{\nu_{g, \epsilon}} (\nu_{\epsilon} + p + 1) \E_{g_{\epsilon}}[ \log |\Sigma_{\epsilon}| ] \\
&\quad + trace ((\Sigma_{\epsilon,0} + \beta nS) \E_{g_{\epsilon}}[ \Sigma_{\epsilon}^{-1} ] ) - 2 \cdot \text{Entropy}[g_{\epsilon}] \\
&= \argmin_{\nu_{g, \epsilon}}  (\nu_{\epsilon} + p + 1) \Big( - p \log 2 + p \log (\nu_{g, \epsilon} + p + 1) \\
&\quad+ \log |\hat{\Sigma}_{\epsilon}| - \sum_{i=1}^{p} \psi \Big(\frac{\nu_{g, \epsilon} - p + i}{2}\Big) \Big) \\
&\quad+ \frac{\nu_{g, \epsilon}}{\nu_{g, \epsilon} + p + 1}  trace ((\Sigma_{\epsilon,0} + \beta nS) \hat{\Sigma}_{\epsilon}^{-1}) \\
&\quad- 2 \log \Gamma_p(\frac{ \nu_{g, \epsilon}}{2}) - \nu_{g, \epsilon} p - p(p + 1) \log (\nu_{g, \epsilon} + p + 1) \\
&\quad+ (\nu_{g, \epsilon} + p + 1) \sum_{i=1}^{p} \psi \Big(\frac{\nu_{g, \epsilon} - p + i}{2} \Big) \\
&= \argmin_{\nu_{g, \epsilon}} p (\nu_{\epsilon} + p + 1) \log (\nu_{g, \epsilon} + p + 1) \\
&\quad- (\nu_{\epsilon} + p + 1) \sum_{i=1}^{p} \psi \Big(\frac{\nu_{g, \epsilon} - p + i}{2}\Big) \\
&\quad + \frac{\nu_{g, \epsilon}}{\nu_{g, \epsilon} + p + 1}  trace ((\Sigma_{\epsilon,0} + \beta nS) \hat{\Sigma}_{\epsilon}^{-1}) \\
&\quad - 2 \log \Gamma_p(\frac{ \nu_{g, \epsilon}}{2}) - \nu_{g, \epsilon} p - p(p + 1) \log (\nu_{g, \epsilon} + p + 1) \\
&\quad + (\nu_{g, \epsilon} + p + 1) \sum_{i=1}^{p} \psi \Big(\frac{\nu_{g, \epsilon} - p + i}{2} \Big) \\
&= \argmin_{\nu_{g, \epsilon}} \frac{\nu_{g, \epsilon}}{\nu_{g, \epsilon} + p + 1}  trace ((\Sigma_{\epsilon,0} + \beta nS) \hat{\Sigma}_{\epsilon}^{-1}) \\
&\quad- 2 \log \Gamma_p(\frac{ \nu_{g, \epsilon}}{2}) - \nu_{g, \epsilon} p + p \nu_{\epsilon} \log (\nu_{g, \epsilon} + p + 1) \\
&\quad + (\nu_{g, \epsilon} - \nu_{\epsilon}) \sum_{i=1}^{p} \psi \Big(\frac{\nu_{g, \epsilon} - p + i}{2} \Big) \, .
\end{align*}

And analogously, we have 
\begin{align*}
\hat{\nu}_{g, j} &= \argmin_{\nu_{g, j}} \,  \frac{\nu_{g, j}}{\nu_{g, j} + p_j + 1}  trace ((\Sigma_{j,0} + nS_j) \hat{\Sigma}_{j}^{-1}) \\
&\quad - 2 \log \Gamma_{p_j}(\frac{ \nu_{g, j}}{2}) - \nu_{g, j} p_j \\
&\quad + p_j (\nu_{j} + n) \log (\nu_{g, j} + p_j + 1) \\
&\quad + (\nu_{g, j} - \nu_{j} - n) \sum_{i=1}^{p_j} \psi \Big(\frac{\nu_{g, j} - p_j + i}{2} \Big) \, . 
\end{align*}

 \section{Spectral Clustering for variable clustering with the Gaussian graphical model \label{app:spectralClustering}}
 Let $S \in \mathbb{R}^{p \times p}$  denote the sample covariance matrix of the observed variables.
Under the assumption that the observations are drawn i.i.d. from a multivariate normal distribution, with mean $\mathbf{0}$ and precision matrix $X + \beta X_{\epsilon}$, the log-likelihood\footnote{Up to a constant that does not depend on $X$.} of the data is given by
\begin{align*}
& \frac{n}{2} ( \log |X + \beta X_{\epsilon}| - trace ((X + \beta X_{\epsilon}) S ) ) \, ,
\end{align*}
where $n$ is the number of observations.
We assume that $X$ is block sparse, i.e. a permutation matrix $P$ exists such that $P^T X P$ is block diagonal. 
If we knew the number of blocks $m$, then we could estimate the block matrix $X$ (and thus the variable clustering) by the following optimization problem. \\
\begin{flushleft}
Optimization Problem 1:
\end{flushleft}
\begin{ceqn}
\begin{align*}
& \minimize_{X \succ 0} - \log |X + \beta X_{\epsilon}| + trace ((X + \beta X_{\epsilon}) S ) \\
& \text{subject to} \\
& \text{X is block sparse with exactly $m$ blocks} \, ,
\end{align*}
\end{ceqn}
where $\beta X_{\epsilon}$ is assumed to be a constant matrix with small entries.
We claim that this can be reformulated, for any $q > 0$, as following. \\
\begin{flushleft}
Optimization Problem 2:
\end{flushleft}
\begin{ceqn}
\begin{align*}
& \minimize_{X \succ 0} - \log |X + \beta X_{\epsilon}| + trace ((X + \beta X_{\epsilon}) S ) \\
& \text{subject to} \\
& L_{ii} = \sum_{k \neq i} |X_{ik}|^q \, ,\\
& L_{ij} =  - |X_{ij}|^q  \; \;  \text{for } i \neq j, \\
& rank(L) = p - m \, .
 \end{align*}
\end{ceqn}
\begin{proposition}
Optimization problem 1 and 2 have the same solution. 
Moreover, the $m$ dimensional  null space of $L$ can be chosen such that each basis vector is the indicator vector for one variable block of $X$. 
\end{proposition}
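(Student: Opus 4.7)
The plan is to reduce the proposition to a well-known fact from spectral graph theory by recognizing $L$ as a weighted graph Laplacian. First, I would observe that $L$ as defined in Problem 2 is exactly the combinatorial Laplacian of the undirected weighted graph $G_X$ on the vertex set $\{1,\ldots,p\}$ whose edge weight between $i$ and $j$ is $w_{ij} := |X_{ij}|^q$. Because $q > 0$, the edge $\{i,j\}$ is present in $G_X$ with positive weight if and only if $X_{ij} \neq 0$, so the edge set of $G_X$ is precisely the off-diagonal sparsity pattern of $X$.

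Next, I would invoke the standard fact that for a Laplacian of a graph with nonnegative weights, the multiplicity of the eigenvalue $0$ equals the number of connected components of the graph, and the corresponding nullspace is spanned by the indicator vectors of those components. This turns the rank constraint $\text{rank}(L) = p - m$ of Problem 2 into the equivalent statement that $G_X$ has exactly $m$ connected components. The short combinatorial check that $X$ is block sparse with exactly $m$ blocks (after some permutation) if and only if $G_X$ has exactly $m$ connected components, then shows the two feasibility sets coincide; since the objectives of Problem 1 and Problem 2 are identical, the optimal solutions agree.

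For the second assertion, the connected components $C_1,\ldots,C_m$ of $G_X$ are, by the equivalence just established, exactly the variable blocks of the optimal $X$. The basis $\{\mathbf{1}_{C_1},\ldots,\mathbf{1}_{C_m}\}$ of the nullspace supplied by the spectral graph theory fact above is therefore already of the required form, which proves the ``moreover'' part.

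I do not anticipate any real technical obstacle: the heart of the argument is the elementary observation that the off-diagonal sparsity pattern of $X$ coincides with the edge set of $G_X$ (which is where $q > 0$ is used), after which everything follows from textbook Laplacian spectral theory. The only mild bookkeeping is to state cleanly what ``block sparse with exactly $m$ blocks'' means, namely that $X$ is permutation-similar to a block diagonal matrix with $m$ irreducible diagonal blocks, so that the notion matches the graph-theoretic notion of having exactly $m$ connected components.
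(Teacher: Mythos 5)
Your proposal is correct and follows essentially the same route as the paper: the paper also forms the nonnegative matrix $\tilde{X}_{ij} := |X_{ij}|^q$, identifies $L$ as the unnormalized graph Laplacian, and invokes the standard result (Proposition~2 of von Luxburg's tutorial) that the multiplicity of the zero eigenvalue equals the number of connected components, with the nullspace spanned by component indicator vectors. The only cosmetic difference is that you make the underlying weighted graph and the equivalence between block sparsity and connected components explicit, which the paper leaves implicit.
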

\begin{proof}
First let us define the matrix $\tilde{X}$, by $\tilde{X}_{ij} := |X_{ij}|^q$.
Then clearly, iff $X$ is block sparse with $m$ blocks, so is $\tilde{X}$.
Furthermore, $\tilde{X}_{ij} \geq 0$, and $L$ is the unnormalized Laplacian as defined in \citep{von2007tutorial}.
We can therefore apply Proposition (2) of \citep{von2007tutorial}, to find that the dimension of the eigenspace of L corresponding to eigenvalue 0, is exactly the number of blocks in $\tilde{X}$.
Also from Proposition (2) of \citep{von2007tutorial} it follows that each such eigenvector $\mathbf{e}_k \in \mathbb{R}^{p}$ can be chosen such that it indicates the variables belonging to the same block, i.e. $\mathbf{e}_k(i) \neq 0$, iff variable i belongs to block k. \qed
\end{proof}

Using the nuclear norm as a convex relaxation for the rank constraint, we have
\begin{align*}
& \minimize_{X \succeq 0} - \log |X + \beta X_{\epsilon}| + trace ((X + \beta X_{\epsilon}) S ) + \lambda_m ||L||_*  \\
& \text{subject to}  \\
& L_{ii} = \sum_{k \neq i} |X_{ik}|^q \, , \\
& L_{ij} =  - |X_{ij}|^q  \; \;  \text{for } i \neq j \, .
\end{align*}
with an appropriately chosen $\lambda_m$. By the definition of $L$, we have that $L$ is positive semi-definite, and therefore $||L||_* = trace(L)$.
As a consequence, we can rewrite the above problem as 
\begin{align*}
& X^{*} := \argmin_{X \succeq 0} - \log |X + \beta X_{\epsilon}| + trace ((X + \beta X_{\epsilon}) S ) \\
& \quad \quad \quad \quad \quad \quad + \lambda_m \sum_{i \neq j} |X_{ij}|^q \, .
\end{align*}
Finally, for the purpose of learning the Laplacian $L$, we ignore the term $\beta X_{\epsilon}$ and set it to zero. 
This will necessarily lead to an estimate of $X^{*}$ that is not a clean block matrix, but has small non-zero entries between blocks.
Nevertheless, spectral clustering is known to be robust to such violations \citep{ng2002spectral}.
This leads to Algorithm \ref{alg:spectralVarClustering} in Section \ref{sec:restrictingHypothesesSpace}.

\bibliographystyle{plainnat}
\bibliography{paperReferences_new.bib}

\end{document}